\documentclass[12pt,a4paper,authoryear]{elsarticle}
\usepackage[margin=25mm]{geometry}
\usepackage{amsmath,amssymb,amsthm}
\usepackage{physics}
\usepackage{graphicx}
\usepackage{booktabs}
\usepackage{natbib}
\usepackage[colorlinks=true,linkcolor=blue,citecolor=blue,urlcolor=blue]{hyperref}
\hypersetup{bookmarksdepth=subsubsection}
\newtheorem{lemma}{Lemma}
\newtheorem{proposition}{Proposition}
\theoremstyle{remark}
\newtheorem{remark}{Remark}

\newcommand{\sm}{\mathrm{sm}}
\newcommand{\eq}{\mathrm{eq}}
\newcommand{\sym}{\operatorname{sym}}
\newcommand{\avg}[1]{\left\langle #1 \right\rangle}
\newcommand{\bsigma}{\vb*{\sigma}}
\newcommand{\beps}{\vb*{\varepsilon}}
\newcommand{\bs}{\vb{s}}
\newcommand{\bm}{\vb{m}}
\newcommand{\bA}{\vb{A}}
\newcommand{\bB}{\vb{B}}
\newcommand{\bI}{\vb{I}}
\newcommand{\bT}{\mathbb{T}}
\newcommand{\rt}{\tilde\rho}
\newcommand{\ct}{\tilde\chi}
\newcommand{\tht}{\tilde\theta}
\newcommand{\qt}{\tilde q}
\newcommand{\zt}{\tilde\zeta}
\newcommand{\mut}{\tilde\mu_T}
\newcommand{\tz}{\tilde t_0}
\newcommand{\edot}{\dot\varepsilon}
\newcommand{\gdot}{\dot\gamma}
\begin{document}
\begin{frontmatter}
\title{Averaging in thermodynamic dislocation theory:\\ general macroscopically uniform stress and strain states}
\author[1,2]{K. C. Le\corref{cor}}
\ead{lekhanhchau@tdtu.edu.vn}
\cortext[cor]{Corresponding author.}
\address[1]{Mechanics of Advanced Materials and Structures, Institute for Advanced Study in Technology, Ton Duc Thang University, Ho Chi Minh City, Vietnam}
\address[2]{Faculty of Civil Engineering, Ton Duc Thang University, Ho Chi Minh City, Vietnam}

\begin{abstract}
The averaging procedure of \citet{le2020averaging}, developed there for polycrystalline bars under axially symmetric tension or compression, is extended to arbitrary macroscopically uniform stress and strain states. Starting from the equal probability hypothesis for grain orientations, the mean resolved shear stress and the mean resolved elastic and plastic shear strains are defined as root-mean-square averages over all slip-system orientations. Two exact identities for isotropic orientation averages show that the mean resolved shear stress is proportional to the von Mises equivalent stress, and that the direction of macroscopic plastic flow, obtained from the hypothesis that the plastic slip rate on a system is proportional to the resolved shear stress acting on it, is the stress deviator. The result is an associated $J_2$ flow theory whose hardening law is not fitted but follows from the kinetics of thermally activated dislocation depinning and the evolution equations for the dislocation density and the effective disorder temperature of thermodynamic dislocation theory, written as rates with respect to time so that arbitrary loading paths can be followed. For torsion the theory yields the torque--twist relation of bars and tubes without the classical reductions to a shear stress--strain curve. The parameters for copper are identified jointly from Hopkinson-bar tension, dynamic compression at room and elevated temperatures, and torque--twist records at several twist rates, with the pinning temperature and the reference strain rate fixed by the scaling law for the steady-state flow stress. One set of material parameters, consistent with the earlier compression-only identification, describes tension, compression and torsion from room temperature to $1173$\,K and from $10$ to $2300$\,s$^{-1}$ to $6$\,\% rms over 142 data points; the tension--torsion discrepancy noted by Johnson and Cook is traced to the initial dislocation state of their torsion specimens.
\end{abstract}
\begin{keyword}
plastic deformation \sep thermodynamic dislocation theory \sep
averaging \sep tension \sep compression \sep torsion
\end{keyword}
\end{frontmatter}

\section{Introduction}

The constitutive description of metal plasticity has developed along four lines that differ in what they take as given and what they claim to predict. It is useful to set them out before stating what the present paper adds, because the theory developed here inherits a definite element from each of them and removes a definite weakness of each.

\noindent\textit{Phenomenological plasticity.} The classical theory \citep{hill1950mathematical,prager1955theory,khan1995continuum} postulates a yield surface, a flow rule and a hardening law, and identifies the functions in them from experiments. In its rate-independent form it is the workhorse of structural analysis, and its viscoplastic extensions \citep{perzyna1966fundamental,bodner1975constitutive,chaboche2008review} add an overstress or a unified viscoplastic potential to cover rate and temperature. Its weakness is that it describes rather than predicts: the hardening function is a fit to the tests from which it was taken, and a new temperature, a new strain rate or a new loading path requires a new identification, often a new functional form. The extensive experimental programs that followed metals over decades of strain rate and hundreds of Kelvin \citep{khan1999behaviors,liang1999critical,rusinek2001shear,khan2004quasi,khan2012variable} document this: the flow stress of the same material at different rates and temperatures is reproduced by successively modified expressions, each with its own set of constants, and the rate sensitivity that emerges is itself a fitted function. Nothing in the framework says why the yield surface should be that of von Mises, why the flow should be associated, or how the hardening at one temperature is related to that at another. What the classical theory has established beyond doubt, and what any dislocation-based theory has to reproduce, is the structure of the macroscopic equations: an isotropic flow potential of the deviator, the Prandtl--Reuss direction of plastic flow, and the associated character of the flow rule for metals without pressure sensitivity.

\noindent\textit{Empirical rate-and-temperature laws.} The models used in impact and forming computations -- \citet{johnson1983constitutive}, \citet{zerilli1987dislocation}, the mechanical threshold stress model of \citet{follansbee1988constitutive}, \citet{preston2003model} -- write the von Mises flow stress as an explicit function of equivalent strain, strain rate and temperature. Some of them, notably \citep{zerilli1987dislocation,follansbee1988constitutive}, motivate their functional forms by thermally activated dislocation glide and by the Kocks--Mecking phenomenology of hardening \citep{mecking1981kinetics,kocks2003physics}; the resulting expressions are nevertheless algebraic fits with five to ten constants per material, and the equivalent plastic strain that appears in them as the hardening variable is a substitute for the state of the material that is valid only for proportional loading at constant rate and temperature. They are calibrated in one stress state, usually compression, and applied in all others by the von Mises hypothesis, which is not examined; \citet{johnson1983constitutive}, who had both tension and torsion data for the same materials, found that the two disagreed and averaged their constants. No account of the observed dependence of the flow stress on the history of the deformation, and no extension to non-proportional paths, is possible within this class.

\noindent\textit{Dislocation-density based theories.} A second generation of models replaced the equivalent plastic strain by the dislocation density as the internal variable, with an evolution equation balancing storage and recovery \citep{mecking1981kinetics,estrin1984unified,bergstrom1970dislocation,kocks2003physics}, and later by several densities distinguished by their location or character \citep{roters2000work,beyerlein2008dislocation,austin2011dislocation}. These theories are physically grounded and predictive within their range: the stage-III hardening and its rate and temperature dependence follow from the storage--recovery balance rather than from a fitted function. They share, however, one structural limitation. The evolution of the dislocation density is written as a kinetic equation with rate constants that are themselves rate- and temperature-dependent, because the recovery term must describe thermally activated rearrangement of a dislocation structure whose state of disorder is not among the variables. The approach to a steady state, the dependence of that state on temperature and rate, and the response to a change of rate or temperature during a test are therefore reproduced by adjusting the recovery function, and not derived. In the language of thermodynamics, the dislocation subsystem has an internal energy but no entropy of its own.

\noindent\textit{Thermodynamic dislocation theory.} \citet{langer2010thermodynamic} supplied the missing variable. The dislocations, as slow configurational degrees of freedom, are assigned an effective (configurational) temperature $\chi$ that measures the disorder of their arrangement, in the same way as the effective temperature of shear transformation zones in amorphous plasticity \citep{falk2011deformation}. The second law applied to this subsystem gives the evolution of $\chi$, driven by the plastic work, toward a steady state $\chi_0$ that is a material property; the dislocation density then relaxes toward the most probable density $\rho_s(\chi)$ at that disorder, and the plastic strain rate follows from thermally activated depinning of dislocations at the Taylor stress. Three consequences distinguish this theory from the dislocation-density models: the steady-state flow stress obeys a scaling law in the single variable $(T/T_P)\ln(\dot\varepsilon_r/\dot\varepsilon)$ that has been confirmed for fcc metals over the whole dynamic range \citep{langer2020scaling}; the approach to the steady state is governed by the entropy production rather than by a recovery function; and the response to changes of rate or temperature during a test, and the onset of shear banding, follow from the same equations \citep{langer2016thermal,langer2017thermal}. The theory was formulated for a single slip system in simple shear. Its application to polycrystals has taken two routes. \citet{lieou2019strain,lieou2020thermodynamic} transported the effective-temperature description to crystal plasticity, with a dislocation density and an effective temperature per slip system and the flow computed by finite elements on a polycrystalline aggregate; this yields the full anisotropic response and the texture evolution, at the cost of a crystal-plasticity computation for every material point and of interaction coefficients between the slip systems that have to be modeled. \citet{langer2015statistical} and, in \citet{le2020averaging}, the present author kept the single-system form and identified the resolved shear stress and strain with effective scalars proportional to the uniaxial stress and strain, the proportionality factors following in \citet{le2020averaging} from an average over the orientations of the slip systems of a randomly oriented polycrystal in axially symmetric tension or compression. That derivation gave the geometric factor $2(1+\nu)$ of the stress-rate equation, but it was tied to the uniaxial state and said nothing about torsion, about combined loading, or about the direction of plastic flow.

\noindent\textit{This paper.} The question addressed here is what replaces the mean resolved shear stress and strains, and what fixes the direction of plastic flow, for a general macroscopically uniform stress and strain state, so that the single-system TDT can be applied to any stress state without a crystal-plasticity computation. The answer turns out to be simple. If the means are defined as root-mean-square averages over the orientation space of slip systems, two exact identities for isotropic fourth-order tensors show that the mean resolved shear stress is proportional to the von Mises equivalent stress for every stress state, and that the hypothesis of a plastic slip rate proportional to the resolved shear stress on each system yields the Prandtl--Reuss flow rule. Thus the equal probability hypothesis for grain orientations and the linear partition of slip produce exactly the associated $J_2$ flow theory of classical plasticity, with no adjustable parameter, and the hardening law of that theory is supplied by TDT. The interpretation of the von Mises stress as a root-mean-square shear stress goes back to \citet{novozhilov1952physical}; the partition of slip over orientations is in the spirit of the slip theory of \citet{batdorf1949mathematical}, but with the rms mean slip rate rather than the individual slips as the quantity whose kinetics is prescribed. Relative to the crystal-plasticity route of \citet{lieou2020thermodynamic} the present theory gives up texture and anisotropy and gains a closed macroscopic model of eight ordinary differential equations per material point, whose structure coincides with that of classical $J_2$ plasticity and can be put into any existing finite-element code in place of a phenomenological hardening law.

The second part of the paper concerns the identification of the material parameters from experiments in different stress states. Torsion tests are of particular value because they deliver, on one specimen, a wide range of strain rates and large strains without necking or barreling. They deliver, however, a torque and a twist angle, and the shear stress--strain curves reported in the literature are reductions of these records under assumptions (thin wall, or the Nadai construction for solid bars) that presuppose a rate-independent, single-valued $\tau(\gamma)$. We therefore derive the torque--twist relation of bars and tubes from the theory, as in \citet{le2018bthermodynamic}, and identify the parameters from the measured torques directly. The identification combines the Hopkinson-bar tension tests and the torsion tests of \citet{johnson1983constitutive}, the high-rate torsion tests of \citet{lindholm1980large}, and the dynamic compression tests of \citet{samanta1971dynamic} at room and elevated temperatures, the pinning temperature and the reference strain rate being fixed beforehand by the scaling law for the steady-state flow stress \citep{le2020two,langer2020scaling}. Quasi-static tests are excluded, since the depinning kinetics is not adequate for them \citep{langer2010thermodynamic}. The outcome is that one set of material parameters describes tension, compression and torsion of copper from room temperature to $1173$\,K and from $10$ to $2300$\,s$^{-1}$ to $6$\,\% rms, that this set coincides, up to the temperature dependence of one rate constant, with the set identified earlier from compression alone, and that the tension--torsion discrepancy noted by Johnson and Cook is a property of the initial dislocation state of their torsion specimens rather than of the material law.

The paper is organized as follows. Section~\ref{sec:averaging} introduces the orientation measure and proves the two averaging identities. Section~\ref{sec:flow} derives the flow rule from the linear partition of slip and shows its variational meaning. Section~\ref{sec:equations} writes the equations of motion of TDT for the general state in time-rate form, including the antisymmetrized depinning law, the temperature-dependent relaxation rate of the effective temperature, adiabatic heating, and the scaling law that fixes the reference quantities. Section~\ref{sec:bvp} completes them to a boundary value problem in terms of the displacement -- balance of momentum, the algebraic equation for the spherical part of the stress, boundary and initial conditions -- shows that its rate form is a linear elastic problem with a known eigenstrain rate, and gives the variational formulation from which the field equations, the natural boundary conditions and the flow rule follow. Section~\ref{sec:uniaxial} shows the reduction to the uniaxial theory of \citet{le2020averaging}, Section~\ref{sec:torsion} derives the torque--twist relation, Section~\ref{sec:identification} presents the joint identification for copper, and Section~\ref{sec:discussion} discusses the results and the next steps.

\section{Averaging over slip-system orientations}
\label{sec:averaging}

\subsection{Orientation measure}

As in \citet{le2020averaging} we consider a representative volume element of a polycrystal containing a large number of randomly oriented grains and assume that all grain orientations are equally probable. A slip system is characterized by the unit normal $\bm$ to the slip plane and the unit slip direction $\bs\perp\bm$. Under the equal probability hypothesis the pair $(\bs,\bm)$ is distributed as $(\vb{Q}\vb{e}_1,\vb{Q}\vb{e}_3)$ with $\vb{Q}$ uniformly distributed (Haar measure) in $SO(3)$. In the spherical coordinates of \citet{le2020averaging}, $\bm=\vb{e}_r(\theta,\varphi)$ and $\bs=\cos\omega\,\vb{e}_\varphi+\sin\omega\,\vb{e}_\theta$, the average of a function $f(\bs,\bm)$ is
\begin{equation}
  \avg{f}=\frac{1}{8\pi^2}\int_0^{2\pi}\!\!\int_0^{2\pi}\!\!\int_0^{\pi} f\,\sin\theta\,d\theta\,d\varphi\,d\omega .
  \label{eq:measure}
\end{equation}
In \citep{le2020averaging} the average was taken over the ``positive'' orientations only, i.e. over the half of orientation space on which the resolved shear stress is positive. For any function that is even under $\bs\to-\bs$ the two averages coincide; all averages used below are of this kind, so we may use the full measure \eqref{eq:measure} throughout.

For a macroscopic (uniform) symmetric tensor $\bA$ we call
\begin{equation}
  A_{\sm}(\bs,\bm)=\bs\vdot\bA\vdot\bm
\end{equation}
its resolved shear component on the system $(\bs,\bm)$. Since $\bs\vdot\bm=0$, only the deviator $\bA'=\bA-\tfrac13(\tr\bA)\bI$ contributes to $A_{\sm}$.

\subsection{Two identities}

The whole theory rests on the isotropic fourth-order tensor
\begin{equation}
  \bT_{ijkl}=\avg{s_i m_j s_k m_l}.
  \label{eq:T}
\end{equation}

\begin{lemma}
$\bT_{ijkl}=\alpha\,\delta_{ij}\delta_{kl}+\beta\,\delta_{ik}\delta_{jl}+\gamma\,\delta_{il}\delta_{jk}$ with $\alpha=\gamma=-\tfrac{1}{30}$, $\beta=\tfrac{2}{15}$.
\end{lemma}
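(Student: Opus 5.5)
The plan is to use isotropy to reduce the lemma to three scalar contractions. First, $\bT$ is isotropic: under the Haar measure, $(\bs,\bm)$ and $(\vb{R}\bs,\vb{R}\bm)$ have the same distribution for every fixed $\vb{R}\in SO(3)$, by left invariance of the measure on $\vb{Q}$. Hence $\bT_{ijkl}=R_{ip}R_{jq}R_{kr}R_{ls}\bT_{pqrs}$ for all rotations $\vb{R}$. By the classical representation theorem for isotropic fourth-order tensors, $\bT$ then has the form $\alpha\,\delta_{ij}\delta_{kl}+\beta\,\delta_{ik}\delta_{jl}+\gamma\,\delta_{il}\delta_{jk}$. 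No symmetry beyond isotropy is assumed, and in particular we do not assume $\alpha=\gamma$; this equality will come out of the computation.

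Next we determine $\alpha,\beta,\gamma$ by contracting both sides with $\delta_{ij}\delta_{kl}$, $\delta_{ik}\delta_{jl}$ and $\delta_{il}\delta_{jk}$. On the left, we use $|\bs|=|\bm|=1$ and $\bs\vdot\bm=0$ pointwise, so no integration is actually required:
\[
\avg{(\bs\vdot\bm)^2}=0,\qquad \avg{|\bs|^2|\bm|^2}=1,\qquad \avg{(\bs\vdot\bm)^2}=0 .
\]
On the right, with $\delta_{ii}=3$, the three contractions give
\[
9\alpha+3\beta+3\gamma,\qquad 3\alpha+9\beta+3\gamma,\qquad 3\alpha+3\beta+9\gamma .
\]
This yields the linear system
\[
3\alpha+\beta+\gamma=0,\qquad \alpha+3\beta+\gamma=\tfrac13,\qquad \alpha+\beta+3\gamma=0 .
\]
Subtracting the first equation from the third gives $\alpha=\gamma$. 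Then $4\alpha+\beta=0$ and $2\alpha+3\beta=\tfrac13$, so $\beta=-4\alpha$ and $-10\alpha=\tfrac13$. Therefore $\alpha=\gamma=-\tfrac1{30}$ and $\beta=\tfrac{2}{15}$.

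The system's determinant is $(3-1)^2(3+2)=20\neq0$, so the solution is unique. The only substantive step is the isotropy argument together with the cited representation theorem. As a cross-check, one could compute a single component directly from \eqref{eq:measure}, for example $\bT_{1313}=\avg{s_1^2m_3^2}$. With $m_3=\cos\theta$ and $s_1=-\cos\omega\sin\varphi+\sin\omega\cos\theta\cos\varphi$, this should equal $\beta=\tfrac{2}{15}$, since $\alpha$ and $\gamma$ do not contribute to this component.
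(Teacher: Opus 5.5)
Your proof is correct and follows the paper's route: isotropy of the Haar measure reduces $\bT$ to three coefficients, which are then fixed by contractions that give exactly the paper's linear system $3\alpha+\beta+\gamma=0$, $\alpha+3\beta+\gamma=\tfrac13$, $\alpha+\beta+3\gamma=0$. The only difference is that you contract fully with two Kronecker deltas, so you need only the pointwise facts $|\bs|=|\bm|=1$ and $\bs\vdot\bm=0$. The paper's single contraction (iii) additionally uses $\avg{m_jm_l}=\tfrac13\delta_{jl}$.
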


\begin{proof}
By isotropy of the measure, $\bT$ is an isotropic fourth-order tensor and hence of the stated form. The three coefficients follow from three contractions: (i) $\delta_{ij}\bT_{ijkl}=\avg{(\bs\vdot\bm)m_ks_l}=0$ gives $3\alpha+\beta+\gamma=0$; (ii) $\delta_{il}\bT_{ijkl}=\avg{(\bs\vdot\bm)m_js_k}=0$ gives $\alpha+\beta+3\gamma=0$; (iii) $\delta_{ik}\bT_{ijkl}=\avg{(\bs\vdot\bs)m_jm_l}=\avg{m_jm_l}=\tfrac13\delta_{jl}$ gives $\alpha+3\beta+\gamma=\tfrac13$. Solving, $\alpha=\gamma$, $\beta=-4\alpha$, $-10\alpha=\tfrac13$.
\end{proof}

\begin{proposition}[Averaging identities]
For arbitrary symmetric second-order tensors $\bA$, $\bB$,
\begin{align}
  \avg{A_{\sm}B_{\sm}} &= \tfrac{1}{10}\,\bA'\!:\!\bB', \label{eq:id1}\\
  \avg{A_{\sm}\,\sym(\bs\otimes\bm)} &= \tfrac{1}{10}\,\bA'. \label{eq:id2}
\end{align}
\end{proposition}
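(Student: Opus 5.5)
Both identities follow by contracting the isotropic tensor $\bT$ of the Lemma against $\bA$ and $\bB$ and discarding the terms that the trace kills.

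\emph{Reduction to deviators.} Since $\bs\vdot\bm=0$, we have $A_{\sm}=\bs\vdot\bA'\vdot\bm$, and likewise for $\bB$. We may therefore replace $\bA,\bB$ by $\bA',\bB'$ throughout, so that $\tr\bA'=\tr\bB'=0$.

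\emph{Identity \eqref{eq:id1}.} Writing the product in components,
\[
\avg{A_{\sm}B_{\sm}}=A'_{ij}B'_{kl}\avg{s_i m_j s_k m_l}=A'_{ij}B'_{kl}\bT_{ijkl}.
\]
Inserting the Lemma, the $\alpha$-term gives $\alpha(\tr\bA')(\tr\bB')=0$. The $\beta$-term gives $\beta A'_{ij}B'_{ij}=\beta\,\bA'\!:\!\bB'$. The $\gamma$-term gives $\gamma A'_{ij}B'_{ji}$, which equals $\gamma\,\bA'\!:\!\bB'$ because $\bB'$ is symmetric. The sum is $(\beta+\gamma)\,\bA'\!:\!\bB'=(\tfrac{2}{15}-\tfrac{1}{30})\,\bA'\!:\!\bB'=\tfrac{1}{10}\,\bA'\!:\!\bB'$.

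\emph{Identity \eqref{eq:id2}.} The component $kl$ of the left side is
\[
\tfrac12 A'_{ij}\avg{s_im_j(s_km_l+s_lm_k)}=\tfrac12 A'_{ij}(\bT_{ijkl}+\bT_{ijlk}).
\]
Apply the Lemma to each term. The $\alpha$-contributions are multiples of $\tr\bA'\,\delta_{kl}$ and vanish. From $\bT_{ijkl}$ the remaining terms give $\beta A'_{kl}+\gamma A'_{lk}$. From $\bT_{ijlk}$ they give $\beta A'_{lk}+\gamma A'_{kl}$. By symmetry of $\bA'$ both equal $(\beta+\gamma)A'_{kl}$, so the average is $(\beta+\gamma)\bA'=\tfrac1{10}\bA'$.

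\emph{Consistency check.} Contracting \eqref{eq:id2} with $\bB'$ reproduces \eqref{eq:id1}, since $\sym(\bs\otimes\bm)\!:\!\bB=B_{\sm}$. One could also derive \eqref{eq:id2} from \eqref{eq:id1}, noting that the left side of \eqref{eq:id2} is deviatoric because $\tr\sym(\bs\otimes\bm)=\bs\vdot\bm=0$.

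\emph{Main point of care.} The only step needing attention is the bookkeeping of index order between the $\beta$- and $\gamma$-terms. Symmetry of $\bA'$ and $\bB'$ is exactly what lets these two terms merge into the single coefficient $\beta+\gamma=\tfrac1{10}$. Without symmetry, the antisymmetric part of the tensors would enter with the different weight $\beta-\gamma$.
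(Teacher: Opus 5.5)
Your proof is correct and is essentially the paper's: contract the isotropic tensor $\bT$ of Lemma 1 with $\bA$, $\bB$, and use symmetry to merge the $\beta$- and $\gamma$-terms into $\beta+\gamma=\tfrac1{10}$. The only difference is bookkeeping: you pass to deviators first via $\bs\vdot\bm=0$, so the $\alpha$-term vanishes, whereas the paper keeps the full $\bA$, $\bB$ and lets $\alpha=-\tfrac1{30}$ supply exactly the trace correction that turns $\tfrac1{10}\bA$ into $\tfrac1{10}\bA'$.
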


\begin{proof}
$\avg{A_{\sm}B_{\sm}}=A_{ij}B_{kl}\bT_{ijkl}=\alpha\tr\bA\tr\bB+(\beta+\gamma)\bA\!:\!\bB=\tfrac{1}{10}\bA\!:\!\bB-\tfrac{1}{30}\tr\bA\tr\bB=\tfrac{1}{10}\bA'\!:\!\bB'$. Likewise $A_{ij}\bT_{ijkl}=\alpha(\tr\bA)\delta_{kl}+\beta A_{kl}+\gamma A_{lk}$, whose symmetric part is $-\tfrac{1}{30}(\tr\bA)\bI+\tfrac{1}{10}\bA=\tfrac{1}{10}\bA'$; the term with $s_km_l$ replaced by $m_ks_l$ gives the same by symmetry of $\bA$.
\end{proof}

Identity \eqref{eq:id2} may be read as a reconstruction formula: a deviator is recovered from its resolved shear components on all systems by
\begin{equation}
  \bA'=10\,\avg{A_{\sm}\,\sym(\bs\otimes\bm)}.
  \label{eq:reconstruction}
\end{equation}
The factor 10 is the same in \eqref{eq:id1} and \eqref{eq:id2}: it is the ratio between a macroscopic scalar product and the mean of the products of resolved components, and it will appear once more as the ratio between the macroscopic plastic power and the mean resolved dissipation.

\begin{remark}
Both identities, and all constants below, were also checked numerically by product Gauss--Legendre/trapezoidal quadrature over the orientation space, which is exact for the trigonometric polynomials involved; the coefficients of Lemma 1 are reproduced to $10^{-14}$.
\end{remark}

\subsection{Mean resolved shear stress and strains}

Let $\bsigma$ be the macroscopic Cauchy stress and $\beps=\beps^e+\beps^p$, $\tr\beps^p=0$, the macroscopic small strain. Neglecting, as in \citet{le2020averaging}, the fluctuations of stress and strain inside the grains, the resolved shear stress and strains on the system $(\bs,\bm)$ are $\tau=\sigma_{\sm}$, $\varepsilon^e_{\sm}$ and $\varepsilon^p_{\sm}$. We define their means as root-mean-square (rms) averages:
\begin{equation}
  \bar\tau=\avg{\tau^2}^{1/2},\qquad
  \bar\varepsilon^e=\avg{(\varepsilon^e_{\sm})^2}^{1/2},\qquad
  \dot{\bar\varepsilon}^p=\avg{(\dot\varepsilon^p_{\sm})^2}^{1/2},\qquad
  \bar\varepsilon^p=\int_0^t\dot{\bar\varepsilon}^p\,dt.
  \label{eq:rms}
\end{equation}
The mean plastic strain is defined through its rate, so that it accumulates along non-proportional paths; for proportional loading $\bar\varepsilon^p=\avg{(\varepsilon^p_{\sm})^2}^{1/2}$. By \eqref{eq:id1},
\begin{equation}
  \bar\tau=\sqrt{\tfrac{1}{10}\bsigma'\!:\!\bsigma'}=\sqrt{\tfrac{J_2}{5}}=\frac{\sigma_{\eq}}{\sqrt{15}},\qquad
  \bar\varepsilon^e=\sqrt{\tfrac{1}{10}\beps'^e\!:\!\beps'^e},\qquad
  \dot{\bar\varepsilon}^p=\sqrt{\tfrac{1}{10}\dot\beps^p\!:\!\dot\beps^p}=\sqrt{\tfrac{3}{20}}\,\dot\varepsilon^p_{\eq},
  \label{eq:means}
\end{equation}
where $J_2=\tfrac12\bsigma'\!:\!\bsigma'$, $\sigma_{\eq}=\sqrt{3J_2}$ is the von Mises equivalent stress and $\dot\varepsilon^p_{\eq}=\sqrt{\tfrac23\dot\beps^p\!:\!\dot\beps^p}$ the equivalent plastic strain rate. Thus the mean resolved shear stress of the randomly oriented polycrystal is, exactly and for every stress state, the von Mises stress divided by $\sqrt{15}$. No dependence on the third invariant survives the rms averaging, because an isotropic quadratic form of a deviator is necessarily a multiple of $J_2$.

With the isotropic effective elastic law $\bsigma'=2\mu\,\beps'^e$, $\tr\bsigma=3K\tr\beps$ of the polycrystal, \eqref{eq:means} gives
\begin{equation}
  \bar\tau=2\mu\,\bar\varepsilon^e
  \label{eq:hooke}
\end{equation}
identically, for any stress state. In \citep{le2020averaging} this relation had to be postulated on the basis of homogenization; here it is a consequence of the definitions, because $\tau=2\mu\varepsilon^e_{\sm}$ holds orientation by orientation and the rms average preserves proportionality.

\begin{remark}[Uniaxial state]
For $\bsigma=\operatorname{diag}(0,0,\sigma)$ we get $\sigma_{\eq}=\sigma$ and $\bar\tau=\sigma/\sqrt{15}$; for $\beps^p=\operatorname{diag}(-\tfrac12,-\tfrac12,1)\varepsilon^p_3$, $\bar\varepsilon^p=\sqrt{3/20}\,\varepsilon^p_3$; for $\beps^e=\operatorname{diag}(-\nu,-\nu,1)\varepsilon^e_3$, $\bar\varepsilon^e=(1+\nu)\varepsilon^e_3/\sqrt{15}$. The corresponding averages over positive orientations in \citet{le2020averaging} are $2\sigma/3\pi$, $\varepsilon^p_3/\pi$ and $2(1+\nu)\varepsilon^e_3/3\pi$; the rms values are larger in all three cases by the same factor $(\pi/2)\sqrt{3/5}=\pi\sqrt{3/20}=1.2167$. Substituting $\bar\tau$ and $\bar\varepsilon^e$ into \eqref{eq:hooke} reproduces $\sigma=2\mu(1+\nu)(\varepsilon-\varepsilon^p)$; the geometric factor $2(1+\nu)$ is unchanged.
\end{remark}

\begin{remark}[Relation to classical results]
\citet{novozhilov1952physical} showed that the mean square of the shear traction $\vb{t}_s=\bsigma\bm-(\bm\vdot\bsigma\vdot\bm)\bm$ over all planes through a point is $\tau_i^2:=\avg{|\vb{t}_s|^2}_{\bm}=\tfrac{6}{15}J_2$, so that the rms shear stress over all planes, $\tau_i=\sqrt{6/15}\sqrt{J_2}=\sqrt{2/15}\,\sigma_{\eq}$, is proportional to the von Mises stress. Identity \eqref{eq:means}$_1$ follows from this by one more averaging: on a given plane the resolved shear stress is $\tau=\vb{t}_s\vdot\bs$ with $\bs$ uniformly distributed on the unit circle of the plane, hence $\avg{\tau^2}_{\bs}=\tfrac12|\vb{t}_s|^2$ and $\avg{\tau^2}=\tfrac12\tau_i^2=\tfrac12\cdot\tfrac{6}{15}J_2=\tfrac15J_2$, i.e. $\bar\tau=\tau_i/\sqrt2$. Conversely, Novozhilov's constant is contained in Lemma 1: $\tau_i^2=|\bsigma\bm|^2-(\bm\vdot\bsigma\vdot\bm)^2=(\tfrac{10}{15}-\tfrac{4}{15})J_2$. The mean over positive orientations of \citet{le2020averaging}, $\avg{|\tau|}$, is by contrast a $J_2$--$J_3$ function; its Lode-angle dependence is weak (about 2.7\,\% between uniaxial tension and pure shear at equal $J_2$, see Remark~\ref{rem:family}) but not zero.
\end{remark}

\section{Direction of plastic flow}
\label{sec:flow}

The averaging of Section~\ref{sec:averaging} fixes the scalar variables $\bar\tau$, $\bar\varepsilon^e$, $\bar\varepsilon^p$ on which the one-dimensional TDT of \citet{langer2010thermodynamic,le2020averaging} operates, but not the tensor $\dot\beps^p$. In the uniaxial case its direction was fixed by symmetry; in general an additional hypothesis on how plastic slip is distributed over orientations is needed.

\noindent\textit{Hypothesis (linear partition of slip).} At each instant the plastic resolved shear strain rate on a system is proportional to the resolved shear stress acting on it:
\begin{equation}
  \dot\varepsilon^p_{\sm}(\bs,\bm)=\frac{\dot{\bar\varepsilon}^p}{\bar\tau}\,\tau(\bs,\bm).
  \label{eq:partition}
\end{equation}
The normalization is not an extra assumption: taking the rms of both sides of \eqref{eq:partition} gives an identity by \eqref{eq:rms}. The mean plastic slip rate $\dot{\bar\gamma}^p=2\dot{\bar\varepsilon}^p$ (engineering shear) is the quantity whose kinetics TDT provides; \eqref{eq:partition} says that the system on which $\tau=\bar\tau$ slips at exactly this rate, systems under higher resolved stress faster, systems under lower resolved stress slower, and systems with $\tau<0$ in the opposite sense.

Applying the reconstruction formula \eqref{eq:reconstruction} to $\dot\beps^p$ and using \eqref{eq:id2} once more,
\begin{equation}
  \dot\beps^p=10\,\avg{\dot\varepsilon^p_{\sm}\sym(\bs\otimes\bm)}=\frac{10\,\dot{\bar\varepsilon}^p}{\bar\tau}\avg{\tau\,\sym(\bs\otimes\bm)}=\frac{\dot{\bar\varepsilon}^p}{\bar\tau}\,\bsigma'.
  \label{eq:PR}
\end{equation}
This is the Prandtl--Reuss flow rule. In terms of the von Mises variables \eqref{eq:means} it takes the standard form
\begin{equation}
  \dot\beps^p=\frac{3}{2}\frac{\dot\varepsilon^p_{\eq}}{\sigma_{\eq}}\,\bsigma',\qquad
  \dot\beps^p=\dot\lambda\,\pdv{\bar\tau}{\bsigma},\quad \dot\lambda=10\,\dot{\bar\varepsilon}^p,
  \label{eq:flowrule}
\end{equation}
i.e. the flow is associated with the potential $\bar\tau(\bsigma)$ and the direction of plastic flow is the direction of the stress deviator. Thus the equal probability hypothesis together with the linear partition \eqref{eq:partition} yields exactly the von Mises/Prandtl--Reuss structure, with no adjustable parameter.

\noindent\textit{Plastic power.} By \eqref{eq:id1} and \eqref{eq:partition},
\begin{equation}
  \bsigma\!:\!\dot\beps^p=10\,\avg{\tau\,\dot\varepsilon^p_{\sm}}=10\,\bar\tau\,\dot{\bar\varepsilon}^p=5\,\bar\tau\,\dot{\bar\gamma}^p=\sigma_{\eq}\,\dot\varepsilon^p_{\eq}.
  \label{eq:power}
\end{equation}
The macroscopic plastic power equals ten times the mean of the resolved dissipation $\tau\dot\varepsilon^p_{\sm}$, with the same factor 10 as in \eqref{eq:reconstruction}. It is this quantity that drives the evolution of the dislocation density and of the effective temperature in Section~\ref{sec:equations}. In \citep{le2020averaging} the analogous relation, $\sigma\dot\varepsilon^p=\tfrac{3}{2}\pi^2\bar\tau\dot{\bar\varepsilon}^p$, holds only for the uniaxial state.

\begin{proposition}[Variational characterization]
Among all distributions $\dot\varepsilon^p_{\sm}(\bs,\bm)$ of resolved plastic strain rates with prescribed rms value $\dot{\bar\varepsilon}^p$, the linear partition \eqref{eq:partition} is the unique one maximizing the mean resolved dissipation $\avg{\tau\dot\varepsilon^p_{\sm}}$, and hence the macroscopic plastic power $\bsigma\!:\!\dot\beps^p$.
\end{proposition}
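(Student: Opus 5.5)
The statement is an instance of the Cauchy--Schwarz inequality in the Hilbert space $L^2$ of functions on orientation space, with inner product $\avg{fg}$ given by the measure \eqref{eq:measure}. I would first fix the admissible class: functions $\dot\varepsilon^p_{\sm}(\bs,\bm)$ in $L^2$ with $\avg{(\dot\varepsilon^p_{\sm})^2}=(\dot{\bar\varepsilon}^p)^2$. These need not a priori be resolved components of a single macroscopic tensor. The resolved stress $\tau=\sigma_{\sm}$ is a fixed element of this space with norm $\avg{\tau^2}^{1/2}=\bar\tau$ by \eqref{eq:rms}. I assume $\bsigma'\neq0$, so that $\bar\tau>0$; otherwise every distribution gives zero dissipation and uniqueness fails trivially.

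The key step is the estimate
\begin{equation*}
\avg{\tau\,\dot\varepsilon^p_{\sm}}\le \avg{\tau^2}^{1/2}\avg{(\dot\varepsilon^p_{\sm})^2}^{1/2}=\bar\tau\,\dot{\bar\varepsilon}^p .
\end{equation*}
Equality holds if and only if $\dot\varepsilon^p_{\sm}=c\,\tau$ almost everywhere with $c\ge0$. The norm constraint then forces $c=\dot{\bar\varepsilon}^p/\bar\tau$, which is exactly \eqref{eq:partition}. To make uniqueness clean rather than quoting the equality case, I would expand
\begin{equation*}
0\le\avg{\bigl(\dot\varepsilon^p_{\sm}-\tfrac{\dot{\bar\varepsilon}^p}{\bar\tau}\tau\bigr)^2}=2(\dot{\bar\varepsilon}^p)^2-2\tfrac{\dot{\bar\varepsilon}^p}{\bar\tau}\avg{\tau\dot\varepsilon^p_{\sm}} .
\end{equation*}
This gives the bound, and it also shows that equality forces the integrand to vanish almost everywhere. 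Uniqueness is therefore in the $L^2$ sense, up to null sets; since the maximizer is a trigonometric polynomial, it is unique among continuous distributions.

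For the second clause ("hence the macroscopic plastic power"), I must connect $\avg{\tau\dot\varepsilon^p_{\sm}}$ to $\bsigma\!:\!\dot\beps^p$. This is the main obstacle, because for an arbitrary distribution the macroscopic $\dot\beps^p$ has to be defined. I would define it by the reconstruction formula \eqref{eq:reconstruction}, $\dot\beps^p:=10\avg{\dot\varepsilon^p_{\sm}\sym(\bs\otimes\bm)}$. Contracting with $\bsigma$ and using the symmetry of $\bsigma$ gives $\bsigma\!:\!\dot\beps^p=10\avg{\dot\varepsilon^p_{\sm}\,\bs\vdot\bsigma\vdot\bm}=10\avg{\tau\dot\varepsilon^p_{\sm}}$. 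This holds for every distribution, not only the linear one, so maximizing one quantity is the same as maximizing the other. For the maximizer, \eqref{eq:power} recovers the maximal value $10\bar\tau\dot{\bar\varepsilon}^p$.

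I would add a remark on uniqueness for the macroscopic power. Distinct distributions can share the same $\dot\beps^p$, because reconstruction projects onto the five-dimensional span of the functions $\bs\vdot\bA'\vdot\bm$. The uniqueness statement should therefore be read at the level of distributions, which is how the proposition is phrased.
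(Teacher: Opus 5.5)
Your proposal is correct and follows the paper's proof, which is exactly the Cauchy--Schwarz inequality $\avg{\tau\dot\varepsilon^p_{\sm}}\le\bar\tau\,\dot{\bar\varepsilon}^p$ in $L^2$ of the orientation measure, with equality if and only if $\dot\varepsilon^p_{\sm}$ is a positive multiple of $\tau$. Your additions make explicit what the paper leaves implicit in the word ``hence'': the completed-square argument for (a.e.) uniqueness, the caveat $\bsigma'\neq0$, and the definition of $\dot\beps^p$ for an arbitrary distribution by the reconstruction formula, under which $\bsigma\!:\!\dot\beps^p=10\avg{\tau\dot\varepsilon^p_{\sm}}$ holds identically.
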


\begin{proof}
By the Cauchy--Schwarz inequality in $L^2$ of the orientation measure,
\[\avg{\tau\dot\varepsilon^p_{\sm}}\le\avg{\tau^2}^{1/2}\avg{(\dot\varepsilon^p_{\sm})^2}^{1/2}=\bar\tau\dot{\bar\varepsilon}^p,\]
with equality if and only if $\dot\varepsilon^p_{\sm}\propto\tau$ with a positive factor.
\end{proof}

Hypothesis \eqref{eq:partition} is therefore the principle of maximum plastic dissipation transported to orientation space, with the rms mean slip rate as the constrained quantity. This is the reason why the resulting macroscopic flow rule is associated.

\begin{remark}[A family of partitions]
\label{rem:family}
Hypothesis \eqref{eq:partition} is the member $n=1$ of the family $\dot\varepsilon^p_{\sm}\propto\operatorname{sgn}(\tau)|\tau|^n$. Each member gives an associated flow rule with the isotropic potential $\Phi_n(\bsigma)=\avg{|\tau|^{n+1}}^{1/(n+1)}$, an $L^{n+1}$ norm of the resolved shear stress over orientation space. The member $n=0$ (all positively oriented systems slip at the same rate) is the hypothesis underlying the averages over positive orientations in \citet{le2020averaging}: $\Phi_0=\avg{|\tau|}$ equals $2\sigma/3\pi$ in uniaxial tension. The limit $n\to\infty$ gives $\Phi_\infty=\max|\tau|=\tfrac12(\sigma_1-\sigma_3)$, i.e. Tresca. Numerically, the ratio $\Phi_n/\sqrt{J_2}$ in pure shear relative to uniaxial tension is 1.027 for $n=0$, 1.000 for $n=1$, 0.996 for $n=2$, 1.000 for $n=3$, 1.007 for $n=4$, 1.014 for $n=5$, 1.029 for $n=7$, and $2/\sqrt3=1.155$ for $n=\infty$. The randomly oriented polycrystal is thus close to von Mises for every reasonable partition. Exactly $J_2$ are the members $n=1$ and $n=3$: $\avg{\tau^2}$ and $\avg{\tau^4}$ are isotropic polynomial invariants of the deviator of degrees 2 and 4, and the only such invariants are $J_2$ and $J_2^2$ ($J_3$ being of degree 3); from degree 6 on, $J_3^2$ enters. Among these, $n=1$ is singled out by the rms definitions \eqref{eq:rms} and by the maximum dissipation property of Proposition 2.
\end{remark}

\section{Equations of motion}
\label{sec:equations}

\subsection{Kinetics of depinning}

As in \citet{le2020averaging} we assume that the mean plastic slip rate is governed by the kinetics of thermally activated dislocation depinning \citep{langer2010thermodynamic},
\begin{equation}
  \dot{\bar\gamma}^p\,t_0=b\sqrt\rho\,\Big[\exp\Big(-\frac{T_P}{T}e^{-\bar\tau/\tau_T}\Big)-\exp\Big(-\frac{T_P}{T}e^{\bar\tau/\tau_T}\Big)\Big],\qquad \tau_T=\mu_T b\sqrt\rho,
  \label{eq:depinning}
\end{equation}
with $t_0$ a microscopic time, $b$ the Burgers vector, $\rho$ the dislocation density, $T$ the ordinary temperature, $k_BT_P$ the pinning energy barrier and $\tau_T$ the Taylor stress. The second term in the bracket is the rate of backward depinning; it makes the slip rate an odd function of the stress that vanishes at $\bar\tau=0$, and it is negligible against the first whenever $e^{-\bar\tau/\tau_T}\ll1$ or $T/T_P\ll1$, which is the case in all room-temperature applications of the theory. At the temperatures of the hot compression tests below, $T/T_P\approx0.02$--$0.03$, the forward term alone would give a slip rate $b\sqrt\rho\,e^{-T_P/T}/t_0$ at zero stress comparable to the imposed strain rates, and no steady state would exist at low stress; the antisymmetrized form removes this artifact of the original law without changing any result in the regime where the original law was used.

Substituting $\bar\tau=\sigma_{\eq}/\sqrt{15}$ and $\dot{\bar\gamma}^p=2\dot{\bar\varepsilon}^p=\sqrt{3/5}\,\dot\varepsilon^p_{\eq}$ from \eqref{eq:means},
\begin{equation}
  \begin{aligned}
  &\dot\varepsilon^p_{\eq}\,t_0=\sqrt{\tfrac53}\,q(\sigma_{\eq},\rho,T),\qquad \bar\mu_T=\sqrt{15}\,\mu_T,\\
  &q=b\sqrt\rho\,\Big[\exp\Big(-\frac{T_P}{T}e^{-\sigma_{\eq}/\bar\mu_Tb\sqrt\rho}\Big)-\exp\Big(-\frac{T_P}{T}e^{\sigma_{\eq}/\bar\mu_Tb\sqrt\rho}\Big)\Big],
  \end{aligned}
  \label{eq:q}
\end{equation}
Equation \eqref{eq:q} replaces Eq.~(5) of \citet{le2020averaging}, where the constants were $\pi/2$ and $3\pi/2$ instead of $\sqrt{5/3}$ and $\sqrt{15}$.

\subsection{Stress rate}

From $\beps=\beps^e+\beps^p$, the isotropic elastic law and \eqref{eq:flowrule},
\begin{equation}
  \dot\bsigma=K(\tr\dot\beps)\bI+2\mu\Big(\dot\beps'-\frac{3}{2}\frac{\dot\varepsilon^p_{\eq}}{\sigma_{\eq}}\bsigma'\Big),\qquad
  \dot\varepsilon^p_{\eq}=\sqrt{\tfrac53}\,\frac{q(\sigma_{\eq},\rho,T)}{t_0}.
  \label{eq:stressrate}
\end{equation}
This generalizes Eq.~(6) of \citet{le2020averaging}. We keep the rates with respect to time. The form with the strain as independent variable used in \citet{le2020averaging} is available only for proportional loading at constant strain rate; for a general path $\beps(t)$, e.g. tension followed by torsion, the direction of $\dot\beps'$ and the magnitude of the rate enter \eqref{eq:stressrate} separately and no single scalar strain can serve as time-like variable.

\subsection{Dislocation density and effective temperature}

The evolution equations for $\rho$ and for the effective disorder temperature $\chi$ of \citet{langer2010thermodynamic} are driven by the plastic power. With \eqref{eq:power} they read
\begin{align}
  \dot\rho &=\frac{\kappa_\rho}{\mu_T\,\zeta^2(\rho,q,T)\,b^2}\,\sigma_{\eq}\dot\varepsilon^p_{\eq}\Big[1-\frac{\rho}{\rho_s(\chi)}\Big],\qquad \rho_s(\chi)=\frac{1}{a^2}e^{-e_d/\chi}, \label{eq:rhodot}\\
  \dot\chi &=\frac{\kappa_\chi(T)}{\mu_T}\,\sigma_{\eq}\dot\varepsilon^p_{\eq}\Big(1-\frac{\chi}{\chi_0}\Big),\qquad
  \zeta(\rho,q,T)=\ln\frac{T_P}{T}-\ln\ln\frac{b\sqrt\rho}{aq}. \label{eq:chidot}
\end{align}
Three remarks on the comparison with Eq.~(8) of \citet{le2020averaging}. First, the driving term $\sigma\dot\varepsilon^p$ (there written as $\sigma q/\bar q_0$ per unit strain) is replaced by the plastic power $\bsigma\!:\!\dot\beps^p=\sigma_{\eq}\dot\varepsilon^p_{\eq}$. Second, the argument of $\zeta$ is the current dimensionless plastic slip rate $q=\dot{\bar\gamma}^pt_0$, as in the original formulation of \citet{langer2010thermodynamic}; in \citet{le2020averaging} it was replaced by the constant $\bar q_0$ of the constant-rate test, which is its value after yield, when the plastic rate has caught up with the imposed rate. For a variable or non-proportional loading the current rate must be retained. With the current rate, $\zeta$ has moreover a closed form: inverting the depinning law \eqref{eq:depinning} (with the backward term neglected), $\ln\ln(b\sqrt\rho/aq)=\ln(T_P/T)-\bar\tau/\tau_T$ up to the factor $a/b$ absorbed in the definition of $q$, hence
\begin{equation}
  \zeta(\rho,q,T)=\frac{\bar\tau}{\tau_T}=\frac{\sigma_{\eq}}{\bar\mu_Tb\sqrt\rho},
  \label{eq:zeta}
\end{equation}
which is the original meaning of this quantity in \citet{langer2010thermodynamic}, the ratio of the driving stress to the Taylor stress. The $\rho$ equation therefore contains no logarithms at all:
\begin{equation}
  \dot\rho=\kappa_\rho\,\frac{\bar\mu_T^2\,\rho\,\dot\varepsilon^p_{\eq}}{\mu_T\,\sigma_{\eq}}\Big[1-\frac{\rho}{\rho_s(\chi)}\Big],
  \label{eq:rhodot2}
\end{equation}
i.e. the storage rate is proportional to $\rho$ and to the plastic strain rate and inversely proportional to the flow stress. In a constant-rate test the form \eqref{eq:zeta} and the form with $\bar q_0$ of \citet{le2020averaging} give stress--strain curves that differ by less than 0.1\,MPa. Third, the rate constant $\kappa_\chi$ of the effective temperature is allowed to depend on the ordinary temperature. Following \citep{le2020two} we take
\begin{equation}
  K_\chi(T)=c_0\exp(T/c_1),
  \label{eq:KchiT}
\end{equation}
with material constants $c_0$, $c_1$, while $\kappa_\rho$ is independent of temperature and strain rate. This dependence expresses that the approach of the dislocation structure to its steady state is thermally assisted; at the temperatures of the hot compression tests it makes the flow stress saturate within the first tenth of strain, as observed, whereas a constant $K_\chi$ fitted to the room-temperature data gives a much slower approach.

\subsection{Dimensionless form}

With $\rt=a^2\rho$, $\ct=\chi/e_d$, $\tht=T/T_P$, $\rt_s=e^{-1/\ct}$ and, as in \citet{le2020averaging}, the assumption that the dimensionless Taylor modulus $\mut$ scales with $\mu(T)$ through a material constant $s$,
\begin{equation}
  \begin{aligned}
  &\qt(\sigma_{\eq},\rt,\tht)=\sqrt{\rt}\,\Big[\exp\Big(-\tfrac{1}{\tht}e^{-x}\Big)-\exp\Big(-\tfrac{1}{\tht}e^{x}\Big)\Big],\qquad x=\frac{\sigma_{\eq}}{\mut\sqrt{\rt}}=\zt,\\
  &\mut=\frac{b}{a}\bar\mu_T=\frac{b}{a}\sqrt{15}\,\mu_T=:s\,\mu(T),
  \end{aligned}
  \label{eq:qtilde}
\end{equation}
the plastic strain rate is $\dot\varepsilon^p_{\eq}=\qt/\tz$ with the single time constant
\begin{equation}
  \tz=\sqrt{\tfrac35}\,\frac{a}{b}\,t_0,
  \label{eq:t0tilde}
\end{equation}
the counterpart of the convention $(2a/\pi b)t_0$ of \citet{le2020averaging}. The system \eqref{eq:stressrate}--\eqref{eq:chidot} becomes
\begin{equation}
  \begin{aligned}
    \dot\bsigma &=K(\tr\dot\beps)\bI+2\mu\Big(\dot\beps'-\frac{3}{2}\frac{\qt}{\tz\,\sigma_{\eq}}\bsigma'\Big),\\
    \dot\rt &=K_\rho\,\frac{\mut\,\rt\,\qt}{\sigma_{\eq}\,\tz}\Big[1-\frac{\rt}{\rt_s(\ct)}\Big],\\
    \dot\ct &=K_\chi(T)\,\frac{\sigma_{\eq}\,\qt}{\mut\,\tz}\Big(1-\frac{\ct}{\ct_0}\Big),
  \end{aligned}
  \label{eq:system}
\end{equation}
with $K_\rho=\kappa_\rho a/b$ and $K_\chi=\kappa_\chi b/(ae_d)$ as in \citet{le2020averaging}. For a prescribed strain path $\beps(t)$, \eqref{eq:system} is a closed system of $6+2$ ordinary differential equations for $\bsigma$, $\rt$, $\ct$. For a prescribed stress path $\bsigma(t)$ the first equation is solved for $\dot\beps$ instead. Mixed control, as in the tension--torsion test of a thin-walled tube where $\varepsilon_{zz}(t)$ and $\gamma_{z\theta}(t)$ are prescribed and the remaining stress components vanish, is handled by solving the stress-rate equation for the unprescribed strain rates. Dividing \eqref{eq:system} by a constant $\dot\varepsilon_{\eq}$ for proportional loading recovers the strain-parametrized form with $\qt_0=\dot\varepsilon_{\eq}\tz$ of \citet{le2020averaging}.

\subsection{Finite strains}
\label{sec:finite}

The equations have been written for small strains, with the additive decomposition $\beps=\beps^e+\beps^p$, but they are applied below to plastic strains of order one. The justification is the smallness of the \emph{elastic} strain in metals and the isotropy of the polycrystal, and it is worth stating precisely, since the torsion tests reach $\gamma=3$.

The finite-strain kinematics of an elastic-plastic solid is the multiplicative decomposition $\vb{F}=\vb{F}^e\vb{F}^p$ of the deformation gradient into a plastic part, which carries the material through the intermediate configuration produced by dislocation glide, and an elastic part, which distorts and rotates the lattice \citep{kroner1959allgemeine,lee1969elastic}. For an isotropic elastic response and an isotropic flow rule the theory takes a canonical form \citep{simo1988aframework,simo1992algorithms,miehe1992canonical,gurtin2005theory,gurtin2010mechanics}: the plastic spin can be set to zero without loss of generality (the material is plastically irrotational), and the rate of deformation $\vb{D}=\sym(\dot{\vb{F}}\vb{F}^{-1})$ decomposes additively,
\begin{equation}
  \vb{D}=\vb{D}^e+\vb{D}^p,\qquad \vb{D}^p=\dot\varepsilon^p_{\eq}\,\frac{3}{2}\frac{\bsigma'}{\sigma_{\eq}},
  \label{eq:Dsplit}
\end{equation}
with the Cauchy stress $\bsigma$ related to the elastic logarithmic strain by the isotropic law and its objective (corotational) rate to $\vb{D}^e$ by the elastic moduli, up to terms of the order of the elastic strain \citep{simo1992algorithms,holzapfel2000nonlinear}. The system \eqref{eq:system} is to be read in this sense: $\dot\beps$ stands for $\vb{D}$, $\dot\bsigma$ for an objective rate of the Cauchy stress, and the flow rule \eqref{eq:flowrule} and the plastic power \eqref{eq:power} are exactly the finite-strain expressions \eqref{eq:Dsplit} and $\bsigma\!:\!\vb{D}^p$. The internal variables $\rt$, $\ct$, $T$ are scalars and need no transport. The orientation average of Section~\ref{sec:averaging} is unaffected, because the resolved quantities on a slip system are defined from $\bsigma$ and $\vb{D}$ in the current configuration and the equal probability hypothesis is a statement about the current distribution of orientations; what the finite-strain theory adds, and the present isotropic version omits, is the evolution of that distribution, i.e. texture.

For the uniform tests used below the reading is explicit. In uniaxial tension and compression at constant true strain rate, $\vb{D}$ is diagonal and constant in direction, the objective rate coincides with the material rate, and the equations of Section~\ref{sec:uniaxial} hold verbatim with $\varepsilon$ the logarithmic strain and $\sigma$ the true stress, which are the quantities reported by \citet{samanta1971dynamic} and \citet{johnson1983constitutive}. In torsion of a tube or bar the deformation at each radius is a simple shear of amount $\gamma=\kappa r$; $\vb{D}$ has the single component $D_{z\theta}=\gdot/2$, so that the flow rule keeps $\vb{D}^p$ a pure shear, and the difference between the objective and the material rate of $\bsigma$ is the term $\vb{W}\bsigma-\bsigma\vb{W}$ with the spin $W_{z\theta}=\gdot/2$, which generates normal stresses of order $\tau\gamma$ times the elastic strain $\tau/\mu$ -- about $1$\,\% of $\tau$ at $\gamma=3$ for the stresses of copper -- and modifies the shear stress by the same relative amount. This is the Swift effect of the finite-strain $J_2$ theory \citep{simo1988aframework}; at the level of the torque it is below the digitization accuracy and is neglected. The strain measure $\gamma=R_m\varphi/L$ of the thin-walled specimen is exact for simple shear at any twist. The theory is therefore applicable up to the strains of the torsion tests without change, with the one reservation that texture, which develops in copper beyond $\gamma\approx1$--$2$ and is the mechanism behind stage-IV hardening, is outside it.

\subsection{Adiabatic heating}

For tests shorter than about $0.1$\,s the heat generated by plastic work has no time to leave the specimen and the temperature rises adiabatically,
\begin{equation}
  \rho_m c\,\dot T=\beta\,\sigma_{\eq}\dot\varepsilon^p_{\eq},
  \label{eq:adiabatic}
\end{equation}
with the Taylor--Quinney fraction $\beta$, the mass density $\rho_m$ and the specific heat $c$; $\mu(T)$ and $\tht=T/T_P$ in \eqref{eq:system} are evaluated at the current temperature. Longer tests are treated as isothermal. For copper ($\rho_mc=3.45$\,MJ\,m$^{-3}$K$^{-1}$, $\beta=0.9$) the rise is $18$\,K at $\varepsilon=0.3$ and $451$\,s$^{-1}$, which lowers the stress by about $9$\,MPa, and $12$--$20$\,K at $\varepsilon=0.5$ in the hot compression tests at $10^3$\,s$^{-1}$.

\subsection{Scaling law and reference quantities}
\label{sec:scaling}

In the steady state of a constant-rate test, $\rt=\rt_s(\ct_0)$ and the depinning law \eqref{eq:qtilde} can be inverted for the flow stress. With the backward term neglected,
\begin{equation}
  \frac{\sigma_s}{\mu(T)\,s\sqrt{\rt_s}}=\ln\frac{1}{x},\qquad x=\frac{T}{T_P}\ln\frac{\edot_r}{\edot},\qquad
  \edot_r=\frac{\sqrt{\rt_s}}{\tz},
  \label{eq:scalinglaw}
\end{equation}
which is the double-logarithmic scaling law for the steady-state flow stress of fcc metals \citep{le2020two}. It contains three combinations of the material parameters: the steady-state Taylor stress over the shear modulus, $s_T:=s\sqrt{\rt_s}$, the pinning temperature $T_P$, and the reference strain rate $\edot_r$. These three quantities were identified in \citet{le2020two} from the steady-state stresses of Samanta's dynamic compression tests of copper \citep{samanta1971dynamic} at three temperatures and four strain rates, with the result
\begin{equation}
  s_T=6.3915\times10^{-3},\qquad T_P=45\,000\ \text{K},\qquad \edot_r=3.16\times10^{12}\ \text{s}^{-1}.
  \label{eq:MRCvalues}
\end{equation}
We keep these values fixed. Then $T_P$ is no longer a free parameter, and $s$ and $\tz$ follow from $\ct_0$,
\begin{equation}
  s=\frac{s_T}{\sqrt{\rt_s(\ct_0)}}=s_T\,e^{1/2\ct_0},\qquad \tz=\frac{\sqrt{\rt_s(\ct_0)}}{\edot_r}=\frac{e^{-1/2\ct_0}}{\edot_r}.
  \label{eq:derived}
\end{equation}
This replaces the convention $\tz=10^{-12}$\,s of \citet{le2020averaging}. The distinction matters: the steady-state stress depends on $T_P$ only through $T_P/\ln(\edot_r/\edot)$, so a pinning temperature identified with an arbitrary $\tz$ is not comparable with one identified with the reference rate of the scaling law. The remaining material parameters are $\ct_0$, $K_\rho$, $c_0$ and $c_1$. For $\ct_0$ we adopt the bound $\ct_0\le0.25$ of \citet{langer2010thermodynamic}, which all previous identifications have respected.

\section{Boundary value problem and variational formulation}
\label{sec:bvp}

Section~\ref{sec:equations} gives the constitutive equations at a material point. For a body they must be completed by the balance of momentum, by the equation for the spherical part of the stress, which \eqref{eq:system} contains only in rate form, and by boundary and initial conditions. We state the resulting boundary value problem in terms of the displacement, show that its rate form is a linear elastic problem with a known eigenstrain rate, and give the variational formulation from which the field equations and the natural boundary conditions follow, in the manner of \citet{le2018dthermodynamic}.

\subsection{Field equations}

Let the body occupy the region $\Omega$ with boundary $\partial\Omega=\partial_u\Omega\cup\partial_t\Omega$, and let $\vb{u}(\vb{x},t)$ be the displacement, $\beps=\sym\grad\vb{u}$ the strain and $\beps^p$ the plastic strain, $\tr\beps^p=0$ by the flow rule \eqref{eq:flowrule}. The unknown fields are $\vb{u}$ (3), $\bsigma$ (6), $\rt$, $\ct$ and $T$, twelve in all. The equations are:

\noindent\textit{Balance of momentum.}
\begin{equation}
  \div\bsigma+\rho_m\vb{b}=\rho_m\ddot{\vb{u}}\quad\text{in }\Omega,
  \label{eq:momentum}
\end{equation}
with $\vb{b}$ the body force. For the tests of Section~\ref{sec:identification} the inertial term is dropped: in a Hopkinson-bar test the stress equilibrates across the specimen within a few transit times of the elastic wave, about $1\,\mu$s for a millimeter-sized specimen, while the test lasts several hundred microseconds, so that the specimen is in equilibrium at every instant, $\div\bsigma+\rho_m\vb{b}=\vb{0}$. The inertial term must be kept in wave-propagation and impact problems, for which the theory is equally suited.

\noindent\textit{Spherical part of the stress.} Taking the trace of the first equation of \eqref{eq:system} gives $\tr\dot\bsigma=3K\tr\dot\beps$, since $\tr\bsigma'=0$. With $\bsigma=\sigma_m\bI+\bsigma'$, $\sigma_m=\tr\bsigma/3$, and a stress-free reference state this integrates to
\begin{equation}
  \sigma_m=K\,\tr\beps=K\,\div\vb{u}.
  \label{eq:spherical}
\end{equation}
The volumetric response is elastic at all times: plastic flow is isochoric and does not affect the pressure. Equation \eqref{eq:spherical} is algebraic; only the deviator evolves,
\begin{equation}
  \dot\bsigma'=2\mu\Big(\dot\beps'-\dot\beps^p\Big),\qquad
  \dot\beps^p=\frac32\,\frac{\qt(\sigma_{\eq},\rt,\tht)}{\tz\,\sigma_{\eq}}\,\bsigma',
  \label{eq:deviator}
\end{equation}
together with the second and third equations of \eqref{eq:system} for $\rt$ and $\ct$.

\noindent\textit{Heat conduction.} The general form of \eqref{eq:adiabatic} is
\begin{equation}
  \rho_mc\,\dot T=\div(k\grad T)+\beta\,\sigma_{\eq}\dot\varepsilon^p_{\eq},
  \label{eq:heat}
\end{equation}
with the thermal conductivity $k$; the adiabatic and isothermal cases of Section~\ref{sec:equations} are its limits for short and long tests. The thermoelastic coupling is neglected.

\noindent\textit{Boundary and initial conditions.}
\begin{equation}
  \vb{u}=\bar{\vb{u}}\ \text{on }\partial_u\Omega,\qquad \bsigma\vb{n}=\bar{\vb{t}}\ \text{on }\partial_t\Omega,
  \label{eq:bc}
\end{equation}
with $\vb{n}$ the outward normal, plus a temperature or heat-flux condition for \eqref{eq:heat}, and at $t=0$
\begin{equation}
  \vb{u}=\vb{u}_0,\quad \dot{\vb{u}}=\vb{v}_0,\quad \bsigma'=\bsigma'_0,\quad \rt=\rt_i,\quad \ct=\ct_i,\quad T=T_0\quad\text{in }\Omega,
  \label{eq:ic}
\end{equation}
where $\vb{u}_0$ and $\vb{v}_0$ are needed only in the dynamic case. The initial dislocation state $(\rt_i,\ct_i)$ may vary in space, for instance in a specimen with a cold-worked surface layer; for the tests below it is uniform and is one of the unknowns of the identification.

The count is complete: the three components of \eqref{eq:momentum}, the one equation \eqref{eq:spherical}, the five of \eqref{eq:deviator}, the two internal-variable equations and \eqref{eq:heat} are twelve equations for the twelve fields. The system is of first order in time for $(\bsigma',\rt,\ct,T)$ and of second order (or zeroth, in the quasi-static case) for $\vb{u}$.

\subsection{Displacement formulation and the rate problem}
\label{sec:rateproblem}

Consider the quasi-static case. Suppose the state $(\bsigma',\rt,\ct,T)$ is known at the instant $t$. Then the plastic strain rate \eqref{eq:deviator}$_2$ is a known field, since it depends on the current stress and internal variables only and not on the velocity $\vb{v}=\dot{\vb{u}}$. This is the decisive difference from rate-independent plasticity: there is no yield surface, no distinction between loading and unloading and no consistency condition to be enforced; the elastic and the plastic regime are distinguished only by the size of $\qt$, which is exponentially small below the Taylor stress. Differentiating \eqref{eq:momentum}, \eqref{eq:spherical}, \eqref{eq:bc} with respect to time and inserting \eqref{eq:deviator} yields the rate boundary value problem for the velocity,
\begin{equation}
  \begin{aligned}
    &\div\big[K(\div\vb{v})\bI+2\mu(\sym\grad\vb{v})'\big]=\div\big(2\mu\dot\beps^p\big)-\rho_m\dot{\vb{b}}\quad\text{in }\Omega,\\
    &\vb{v}=\dot{\bar{\vb{u}}}\ \text{on }\partial_u\Omega,\qquad
    \big[K(\div\vb{v})\bI+2\mu\big((\sym\grad\vb{v})'-\dot\beps^p\big)\big]\vb{n}=\dot{\bar{\vb{t}}}\ \text{on }\partial_t\Omega.
  \end{aligned}
  \label{eq:ratebvp}
\end{equation}
These are the Navier equations of isotropic linear elasticity for $\vb{v}$, with the plastic strain rate playing the role of a prescribed eigenstrain rate in the sense of \citet{eshelby1957determination} and \citet{mura2013micromechanics}. For $K>0$, $\mu>0$ and a part $\partial_u\Omega$ of positive area the problem has exactly one solution $\vb{v}\in H^1(\Omega)$ by Korn's inequality and the Lax--Milgram lemma \citep{gurtin1973linear,horgan1995korns}; if $\partial_u\Omega$ is empty the solution is unique up to a rigid velocity, provided the loads are self-equilibrated. Equivalently, $\vb{v}$ is the unique minimizer of the strictly convex rate functional
\begin{equation}
  \mathcal{J}[\vb{v}]=\int_\Omega\Big[\tfrac12K(\div\vb{v})^2+\mu\,\big|(\sym\grad\vb{v})'-\dot\beps^p\big|^2\Big]dV-\int_\Omega\rho_m\dot{\vb{b}}\cdot\vb{v}\,dV-\int_{\partial_t\Omega}\dot{\bar{\vb{t}}}\cdot\vb{v}\,dS
  \label{eq:ratefunctional}
\end{equation}
among the velocity fields with $\vb{v}=\dot{\bar{\vb{u}}}$ on $\partial_u\Omega$, whose Euler--Lagrange equations and natural boundary conditions are \eqref{eq:ratebvp}. This is the rate principle of \citet{hill1958general} specialized to a material without a loading--unloading switch, for which it is quadratic.

Once $\vb{v}$ is known, \eqref{eq:deviator} gives $\dot\bsigma'$ and the internal-variable and heat equations give $\dot\rt$, $\dot\ct$, $\dot T$. The complete problem is therefore an evolution equation in time for the state fields $(\bsigma',\rt,\ct,T)$, whose right-hand side contains the solution of the linear elliptic problem \eqref{eq:ratebvp} at every instant, with $\vb{u}$ and $\sigma_m$ recovered from $\vb{v}$ and \eqref{eq:spherical}. This is the structure of the elastic-viscoplastic problem of \citet{perzyna1966fundamental} and \citet{duvaut1976inequalities}, for which the existence theory and the numerical treatment are classical \citep{simo1998computational}; what distinguishes TDT within this class is the form of the plastic strain rate function, which is the depinning law \eqref{eq:qtilde} rather than an overstress power law, and the two internal variables with their own evolution. The right-hand side is smooth in the state: $\qt$ is an odd analytic function of $\sigma_{\eq}$, so that the quotient $\qt/\sigma_{\eq}$ in \eqref{eq:deviator} and \eqref{eq:system} has a finite limit at $\sigma_{\eq}=0$, and $\rt_s(\ct)$ is smooth for $\ct>0$. After spatial discretization by finite elements the problem is a system of ordinary differential equations with a locally Lipschitz right-hand side, which has a unique solution by the Picard--Lindel\"of theorem and is integrated by the stiff solvers used in Sections~\ref{sec:torsion} and \ref{sec:identification}. Two consequences are worth stating. First, since the rate problem is linear elastic, ellipticity is never lost and no bifurcation occurs at the level of the rate equations; localization, such as adiabatic shear banding, develops through the evolution of the state fields, in particular through thermal softening via $\mu(T)$ and $\tht=T/T_P$ in \eqref{eq:heat}, and its width is set by the rate sensitivity and by heat conduction, as in rate-dependent theories generally \citep{needleman1988material}. Second, the uniform tests of Sections~\ref{sec:uniaxial}--\ref{sec:torsion} are the cases in which \eqref{eq:ratebvp} is solved by inspection: in tension the stress is uniform, and in torsion of a circular section the Saint-Venant field satisfies \eqref{eq:ratebvp} for any distribution $\tau(r)$, so that only the local equations remain.

\subsection{Variational formulation}
\label{sec:variational}

The equations of the previous subsections can be derived from a single variational equation for dissipative systems \citep{sedov1965mathematical,berdichevsky2009variational}, in the same way as the equations of the single-crystal TDT with non-uniform plastic slip in \citet{le2018dthermodynamic}. The free energy per unit volume is taken as
\begin{equation}
  \psi(\beps^e,\rho,\chi)=\tfrac12K(\tr\beps^e)^2+\mu\,\beps^{e\prime}\!:\!\beps^{e\prime}+e_D\rho-\chi S_c(\rho),\qquad S_c(\rho)=-\rho\ln(a^2\rho)+\rho,
  \label{eq:psi}
\end{equation}
the sum of the elastic energy and of the configurational free energy of the dislocations of \citet{langer2010thermodynamic}: $e_D$ is the energy per unit length of dislocation and $S_c$ the configurational entropy of a random arrangement of density $\rho$ with minimal spacing $a$. In the dimensionless variables the last two terms read $(e_D/a^2)\,\rt\,[1-\ct(1-\ln\rt)]$. The energy functional of the body is
\begin{equation}
  I[\vb{u},\beps^p,\rho,\chi]=\int_\Omega\psi(\sym\grad\vb{u}-\beps^p,\rho,\chi)\,dV.
  \label{eq:I}
\end{equation}
The dissipation potential per unit volume is
\begin{equation}
  D(\dot\beps^p,\dot\rho,\dot\chi)=\Phi(\dot\varepsilon^p_{\eq};\rho,T)+\tfrac12d_\rho\,\dot\rho^2+\tfrac12d_\chi\,\dot\chi^2,\qquad
  \Phi(\dot e;\rho,T)=\int_0^{\dot e}\Sigma(e;\rho,T)\,de,
  \label{eq:D}
\end{equation}
where $\Sigma(\dot\varepsilon^p_{\eq};\rho,T)$ is the flow stress as a function of the plastic strain rate, i.e. the inverse of the depinning law \eqref{eq:q} with respect to $\sigma_{\eq}$. The inverse exists because $q$ is an odd, strictly increasing function of $\sigma_{\eq}$; with the backward term neglected it is explicit,
\begin{equation}
  \Sigma(\dot\varepsilon^p_{\eq};\rho,T)=\mut\sqrt{\rt}\,\Big[\ln\frac{T_P}{T}-\ln\ln\frac{\sqrt{\rt}}{\tz\,\dot\varepsilon^p_{\eq}}\Big]=\mut\sqrt{\rt}\,\zt,
  \label{eq:Sigma}
\end{equation}
which is the expression \eqref{eq:zeta} for $\zt$ read as a function of the rate. Since $\Sigma$ increases with $\dot\varepsilon^p_{\eq}$, $\Phi$ is convex, and $D$ is a convex function of the rates with $D(\vb{0})=0$. The coefficients $d_\rho$, $d_\chi$ are positive functions of the state specified below.

\noindent\textit{Variational equation.} The true fields $\vb{u}$, $\beps^p$, $\rho$, $\chi$ satisfy
\begin{equation}
  \delta I+\int_\Omega\Big(\pdv{D}{\dot\beps^p}\!:\!\delta\beps^p+\pdv{D}{\dot\rho}\,\delta\rho+\pdv{D}{\dot\chi}\,\delta\chi\Big)dV
  =\int_\Omega\rho_m\vb{b}\cdot\delta\vb{u}\,dV+\int_{\partial_t\Omega}\bar{\vb{t}}\cdot\delta\vb{u}\,dS
  \label{eq:varprinciple}
\end{equation}
for all admissible variations: $\delta\vb{u}=\vb{0}$ on $\partial_u\Omega$, $\tr\delta\beps^p=0$, and $\delta\rho$, $\delta\chi$ arbitrary. In the quasi-static case the fields $\vb{u}$ and $\beps^p$ at each instant, together with the rates, are thus determined by the balance of the variation of the energy against the dissipative forces and the external work.

Since $\partial\psi/\partial\beps^e=K(\tr\beps^e)\bI+2\mu\beps^{e\prime}=\bsigma$, the variation with respect to $\vb{u}$ gives $\int_\Omega\bsigma\!:\!\sym\grad\delta\vb{u}\,dV=\int_\Omega\rho_m\vb{b}\cdot\delta\vb{u}\,dV+\int_{\partial_t\Omega}\bar{\vb{t}}\cdot\delta\vb{u}\,dS$, and after integration by parts the equilibrium equation $\div\bsigma+\rho_m\vb{b}=\vb{0}$ in $\Omega$ and the traction condition $\bsigma\vb{n}=\bar{\vb{t}}$ on $\partial_t\Omega$ as natural boundary condition; the spherical relation \eqref{eq:spherical} is contained in the definition of $\bsigma$, because $\tr\beps^p=0$. This is the weak form used in a finite element discretization. The variation with respect to $\beps^p$ gives, on account of $\partial\psi/\partial\beps^p=-\bsigma$ and of the constraint $\tr\delta\beps^p=0$,
\begin{equation}
  \bsigma'=\pdv{\Phi}{\dot\beps^p}=\frac23\,\Sigma(\dot\varepsilon^p_{\eq};\rho,T)\,\frac{\dot\beps^p}{\dot\varepsilon^p_{\eq}},
  \label{eq:flowvar}
\end{equation}
where $\partial\dot\varepsilon^p_{\eq}/\partial\dot\beps^p=\tfrac23\dot\beps^p/\dot\varepsilon^p_{\eq}$ has been used. Taking the von Mises norm of \eqref{eq:flowvar} yields $\sigma_{\eq}=\Sigma(\dot\varepsilon^p_{\eq})$, i.e. the depinning kinetics \eqref{eq:q}, and its direction yields $\dot\beps^p\parallel\bsigma'$, i.e. the Prandtl--Reuss rule \eqref{eq:flowrule}. The flow rule of Section~\ref{sec:flow} is thus the statement that the stress deviator is the derivative of the dissipation potential, and the maximum dissipation property of Proposition~2 is the convexity of $\Phi$. Finally, the variations with respect to $\rho$ and $\chi$ give
\begin{equation}
  \chi\ln\frac{\rho}{\rho_s(\chi)}+d_\rho\dot\rho=0,\qquad -S_c(\rho)+d_\chi\dot\chi=0,
  \label{eq:rhochivar}
\end{equation}
where $e_D+\chi\ln(a^2\rho)=\chi\ln(\rho/\rho_s)$ with $\rho_s=a^{-2}e^{-e_D/\chi}$ has been used. As in \citet{le2018dthermodynamic}, the coefficients are chosen so that \eqref{eq:rhochivar} coincide with the evolution equations \eqref{eq:rhodot2} and \eqref{eq:chidot}:
\begin{equation}
  d_\rho=\frac{\chi\ln(\rho_s/\rho)}{\kappa_\rho\,\dfrac{\bar\mu_T^2\rho\,\dot\varepsilon^p_{\eq}}{\mu_T\,\sigma_{\eq}}\Big(1-\dfrac{\rho}{\rho_s}\Big)},\qquad
  d_\chi=\frac{S_c(\rho)}{\dfrac{\kappa_\chi(T)}{\mu_T}\,\sigma_{\eq}\dot\varepsilon^p_{\eq}\Big(1-\dfrac{\chi}{\chi_0}\Big)}.
  \label{eq:dcoeff}
\end{equation}
Both are positive in the regime in which the theory is used: $\ln(\rho_s/\rho)$ and $1-\rho/\rho_s$ have the same sign for any $\rho$, and $S_c(\rho)>0$ for $a^2\rho<e$, so that $d_\chi>0$ requires $\chi<\chi_0$; this holds for the initial states of Section~\ref{sec:identification}, which lie below the steady state, and then for all times, since $\chi$ increases monotonically towards $\chi_0$ and never crosses it. The variational treatment of $\rho$ and $\chi$ is a formal one, the physics residing in the kinetic coefficients; its content is that the steady state $\rho_s(\chi)$ is the minimizer of the configurational free energy \eqref{eq:psi} at fixed $\chi$, and that the approach to it is dissipative. For $\vb{u}$ and $\beps^p$, by contrast, \eqref{eq:varprinciple} is substantive: it delivers the field equations, the natural boundary conditions and the flow rule from one functional and one potential, and it is the starting point for the weak formulations on which finite element solutions of non-uniform problems in TDT are built.

\section{Reduction to uniaxial tension/compression}
\label{sec:uniaxial}

For $\bsigma=\operatorname{diag}(0,0,\sigma)$, $\dot\beps=\operatorname{diag}(-\nu_t,-\nu_t,1)\dot\varepsilon$ with the appropriate transverse ratio, one has $\sigma_{\eq}=|\sigma|$, $\dot\varepsilon^p_{\eq}=|\dot\varepsilon^p|$, and \eqref{eq:stressrate} reduces to $\dot\sigma=2\mu(1+\nu)(\dot\varepsilon-\dot\varepsilon^p)$ after elimination of the transverse strains by the traction-free condition, exactly as in \citet{le2020averaging}. For constant $\dot\varepsilon$, dividing \eqref{eq:system} by $\dot\varepsilon$ and replacing $\zt(\rt,\qt,\tht)$ by its post-yield value $\zt(\rt,\qt_0,\tht)$, the system coincides with Eq.~(9) of \citet{le2020averaging} with the two replacements
\begin{equation}
  \frac{\pi}{2}\to\sqrt{\tfrac53},\qquad \frac{3\pi}{2}\to\sqrt{15},
  \label{eq:replacements}
\end{equation}
in the relation between $\dot\varepsilon^pt_0$ and $q$ and in the relation between $\mut$ and $\mu_T$, respectively. Both replacements amount to the same factor $(\pi/2)\sqrt{3/5}=1.2167$, and neither affects the dimensionless system: the fitted quantities of \citet{le2020averaging} are $s=\mut/\mu$ and $\qt_0=\dot\varepsilon\tz$, i.e. the combinations $\mut$ and $\tz$ themselves, not $\mu_T$ and $t_0$. Consequently the predictions of the two theories in uniaxial tension/compression are identical with the same parameters and the same initial conditions, and the comparison with the experiments of \citet{johnson1983constitutive} for OFHC copper, ARMCO iron and 4340 steel in \citet{le2020averaging} carries over without change. What changes is only the identification of the microscopic quantities behind the fitted ones,
\begin{equation}
  \mu_T=\frac{a}{b}\frac{s\mu}{\sqrt{15}}\ \text{instead of}\ \frac{a}{b}\frac{2s\mu}{3\pi},\qquad
  t_0=\sqrt{\tfrac53}\,\frac{b}{a}\,\tz\ \text{instead of}\ \frac{\pi b}{2a}\,\tz,
  \label{eq:micro}
\end{equation}
both larger by the factor 1.2167 than in \citet{le2020averaging}.

\section{Torsion of bars and tubes: the torque--twist relation}
\label{sec:torsion}

Torsion tests deliver a torque $M(t)$ and a twist angle $\varphi(t)$. The shear stress--strain curves reported in the literature are reductions of these two records under assumptions (thin wall, or the Nadai construction for solid bars) that presuppose a rate-independent, single-valued $\tau(\gamma)$. Since the theory is rate dependent and the stress varies across the section, we derive the torque--twist relation from the theory, as in \citet{le2018bthermodynamic}, and identify the parameters from the measured $(M,\varphi)$ records directly.

\noindent\textit{Kinematics and statics.} Consider a circular tube $R_i\le r\le R_o$ ($R_i=0$ for a solid bar) of gauge length $L$, twisted by the angle $\varphi(t)$; $\kappa=\varphi/L$ is the twist per unit length. For a circular section the Saint-Venant field $u_\theta=\kappa zr$ involves no warping, so the only strain component is $\varepsilon_{z\theta}=\kappa r/2$, i.e.
\begin{equation}
  \gamma(r,t)=\kappa(t)\,r,\qquad \gdot=\dot\kappa\,r.
  \label{eq:kinematics}
\end{equation}
The stress is zero initially and the flow rule \eqref{eq:flowrule} keeps $\dot\beps^p$ parallel to $\bsigma'$; hence the stress remains a pure shear $\tau(r,t)=\sigma_{z\theta}$ for all times, with $\sigma_{\eq}=\sqrt3|\tau|$ and $\dot\varepsilon^p_{\eq}=\gdot^p/\sqrt3$. To the order stated in Section~\ref{sec:finite} the theory therefore predicts no axial (Swift) effect, and free-end and fixed-end torsion coincide. Equilibrium is satisfied by any distribution $\tau(r)$, and the torque is
\begin{equation}
  M(t)=2\pi\int_{R_i}^{R_o}\tau(r,t)\,r^2\,dr.
  \label{eq:torque}
\end{equation}

\noindent\textit{Local equations.} With $\sigma_{\eq}=\sqrt3|\tau|$ and $\zt=\sqrt3\tau/(\mut\sqrt{\rt})$ the system \eqref{eq:system} reduces at each radius to
\begin{equation}
  \begin{aligned}
    \dot\tau &=\mu\Big[\dot\kappa\,r-\sqrt3\,\frac{\qt(\sqrt3\tau,\rt,\tht)}{\tz}\Big],\\
    \dot\rt &=K_\rho\,\frac{\mut\,\rt\,\qt}{\sqrt3\,\tau\,\tz}\Big[1-\frac{\rt}{\rt_s(\ct)}\Big],\qquad
    \dot\ct =K_\chi(T)\,\frac{\sqrt3\,\tau\,\qt}{\mut\,\tz}\Big(1-\frac{\ct}{\ct_0}\Big),
  \end{aligned}
  \label{eq:local}
\end{equation}
with $\qt$ from \eqref{eq:qtilde}, supplemented by \eqref{eq:adiabatic} at each radius for the dynamic tests (the heating is applied station by station; conduction across the thin wall is neglected). Together with \eqref{eq:torque}, \eqref{eq:local} is the torque--twist relation. In terms of the bare microscopic time the plastic shear rate reads $\gdot^pt_0=\sqrt5\,q(\sqrt3\tau,\rho,T)$ and the mean resolved shear stress is $\bar\tau=\tau/\sqrt5$; these are the torsional counterparts of $\dot\varepsilon^pt_0=\sqrt{5/3}\,q$ and $\bar\tau=\sigma/\sqrt{15}$ in tension, and both follow from the identity $\avg{\tau_{\sm}^2}=J_2/5$ without any further hypothesis. (In \citep{le2018bthermodynamic} the corresponding factors for polycrystalline bars had to be introduced by a separate argument.) In the elastic range \eqref{eq:torque}--\eqref{eq:local} give $M=\mu J\kappa$ with $J=\pi(R_o^4-R_i^4)/2$.

\noindent\textit{Similarity and size.} Writing $r=R_o\xi$, the local system depends on the geometry only through $\dot\kappa R_o$, and \eqref{eq:torque} becomes $M=2\pi R_o^3\int\tau\,\xi^2d\xi$: at equal surface strain $\kappa R_o$ and surface strain rate $\dot\kappa R_o$, $M/R_o^3$ is independent of the size. The present theory thus predicts no size effect in torsion; the size effect measured in \citet{le2019bthermodynamic} requires the non-redundant dislocations of the next step, and the present prediction is the baseline against which it is measured.

\noindent\textit{Numerical solution and the classical reductions.} Equations \eqref{eq:local} are integrated at $n$ Gauss--Legendre stations across the wall ($n=8$ suffices) and \eqref{eq:torque} is evaluated by the same quadrature; the system has $4n$ unknowns (including the station temperatures) and is integrated in a fraction of a second with a stiff solver. For the specimen of \citet{lindholm1980large} used by \citet{johnson1983constitutive} (gauge length $L=3.18$\,mm, wall thickness $t=0.8$\,mm, mean radius $R_m=6.9$\,mm, $t/R_m\approx0.12$) the stress $M/2\pi R_m^2t$ recovered from the computed torque differs from the local solution $\tau(R_m)$ by less than 0.3\,\% up to $\gamma=1$, so the thin-wall reduction used by Johnson and Cook is admissible for this specimen. For a solid bar of the same outer radius, by contrast, the elastic formula $2M/\pi R^3$ overestimates the surface stress by about 25\,\% at $\gamma_R=0.5$--$1$ with the parameters of Table~\ref{tab:parameters}, and the Nadai construction presupposes a single-valued $\tau(\gamma)$ that a rate-dependent theory does not provide. The identification below is therefore formulated in terms of torque and twist, which are the measured quantities; for the thin-walled tube it coincides numerically with an identification in terms of the reduced stresses, for solid bars it does not.

\section{Joint identification from tension, compression and torsion}
\label{sec:identification}

\subsection{Data}

The theory has the same material parameters for every stress state and temperature, so that all available tests on pure copper in the dynamic regime can be fitted together. Quasi-static tests are excluded: as emphasized in \citet{langer2010thermodynamic}, the depinning kinetics describes the driven, rate-dependent regime and not the slow regime in which recovery and recrystallization compete with depinning. We take $\edot,\gdot\ge1$\,s$^{-1}$ as the admissible range. Five sets are used, 142 points in all:
\begin{enumerate}
\item \citet{johnson1983constitutive}, Hopkinson-bar tension of OFHC copper at $298$, $496$ and $732$\,K, $\edot\approx450$\,s$^{-1}$, $\varepsilon\le0.3$ (27 points; the data used in \citet{le2020averaging});
\item \citet{johnson1983constitutive}, torsion of the thin-walled specimen of \citet{lindholm1980large} at $298$\,K at the nominal rates $\gdot=9.3$, $35$ and $148$\,s$^{-1}$: the complete record at $148$\,s$^{-1}$ of their Fig.~2 from $\gamma=0.15$, where the published record begins, to $3$, and the shear stresses at $\gamma=0.2$ and $0.5$ of their Fig.~4 for $9.3$ and $35$\,s$^{-1}$ (22 points), used as torque and twist. The specimen (Fig.~6 of \citet{lindholm1980large}) has an effective gauge length $L=3.18$\,mm, inner diameter $13.0$\,mm and outer diameter $14.6$\,mm, i.e. $R_m=6.9$\,mm and $t=0.8$\,mm, and the reported stress and strain are $\tau=M/(AR_m)$, $\gamma=R_m\varphi/L$ with $A=2\pi R_mt$; these relations are inverted to recover $M$ and $\varphi$. The test at $0.92$\,s$^{-1}$ is excluded as near-quasi-static, the record beyond $\gamma=3$ because it enters the regime of stage-IV hardening, which the theory does not cover;
\item \citet{lindholm1980large}, torsion of the same specimen, OFHC copper CDA 101 annealed at $370^\circ$C for 60 min (grain size $25$--$35\,\mu$m), at $\gdot=174$, $260$ and $330$\,s$^{-1}$: the records of their Fig.~7 from $\gamma=0.25$ to $3$, before the thermal softening and localization that set in at $\gamma\approx3$--$5$ (48 points), converted to torque and twist in the same way. Two limitations of these records are stated in \citet{lindholm1980large} and matter here: at these rates the machine runs open-loop and ``an initial acceleration period exists before constant angular velocity is achieved'', so that the curves below $\gamma\approx1$ are not constant-rate responses and are ordered inversely to the nominal rate; and the $174$ and $260$\,s$^{-1}$ curves coincide within the line width from $\gamma\approx1.3$ on, so that their assignment below that strain rests on the ordering just mentioned. The quasi-static records of the same figure are not used;
\item \citet{samanta1971dynamic}, compression of annealed copper at $298$\,K at $\edot\approx1800$\,s$^{-1}$ in the modified Hopkinson bar, $\varepsilon=0.06$ to $0.52$ (6 points);
\item \citet{samanta1971dynamic}, compression at $873$, $1023$ and $1173$\,K at $\edot=960$ and $2300$\,s$^{-1}$, the complete curves from $\varepsilon\approx0.07$ to $0.8$ (39 points). The first one or two points of each curve belong to the loading transient of the modified Hopkinson bar (Samanta discards the first $80\,\mu$s of every test) and are retained only for completeness.
\end{enumerate}
The three coppers (OFHC of \citet{johnson1983constitutive}, CDA 101 of \citet{lindholm1980large}, 99.9\,\% copper annealed at $650^\circ$C of \citet{samanta1971dynamic}) are assumed to share the material parameters and to differ only in their initial dislocation state. All data were digitized from the published figures.

\subsection{Initial states}

The initial state $(\rt_i,\ct_i)$ of a specimen is not a material parameter but a property of the specimen and its history, and it must be identified together with the material parameters. Specimens of the same batch and heat treatment tested at different rates share their initial state; specimens heated to different temperatures, or from different stock, do not. We therefore assign one initial state to the Johnson--Cook tension specimens at each temperature, one to the Johnson--Cook torsion tubes (whose heat treatment is not documented in \citet{johnson1983constitutive}), one to the Lindholm tubes, and one to Samanta's specimens at each temperature (each set was heated and held before testing): nine states in all. Two constraints follow from the theory. Since $\rt$ and $\ct$ evolve monotonically toward $\rt_s(\ct)$ and $\ct_0$, an initial state above the steady state would produce a softening stress--strain curve, which is not observed in copper; we therefore parametrize
\begin{equation}
  \ct_i=g\,\ct_0,\qquad \rt_i=f\,\rt_s(\ct_i)=f\,e^{-1/\ct_i},\qquad 0<f<1,\quad 0<g<1,
  \label{eq:fg}
\end{equation}
which makes every computed curve monotonically increasing by construction (apart from the small adiabatic softening). Second, a curve with as many free initial parameters as data points is fitted for any material parameters and carries no information on them; this is why the specimens of one batch tested at different rates share one state.

\subsection{Method}

The material parameters $\ct_0$, $K_\rho$, $c_0$, $c_1$ and the nine initial states $(f,g)$ are identified by the least-squares method of \citet{le2020averaging}: the sum of squares of the relative differences between theoretical and measured stresses (sets 1, 4, 5) and torques (sets 2, 3),
\begin{equation}
  F=\sum_{\text{tests}}\frac{1}{N_k}\sum_{j=1}^{N_k}\Big(\frac{y^{th}_{kj}-y^{exp}_{kj}}{y^{exp}_{kj}}\Big)^2,
  \label{eq:F}
\end{equation}
is minimized in the space of the 22 unknowns, each test being weighted by the inverse of its number of points $N_k$ so that the tension curves do not outvote the torques and the hot steady states. The tension and compression curves are computed from \eqref{eq:system} at constant strain rate with the transverse strains eliminated by the traction-free condition, the torque from \eqref{eq:torque}--\eqref{eq:local}, all with adiabatic heating \eqref{eq:adiabatic}. The minimization uses a global search (multistart sequential quadratic programming from scattered trial points, in scaled variables and within the bounds $0.15\le\ct_0\le0.25$, $1\le K_\rho\le10$, $f\le0.9$, $g\le0.98$), iterated with the identified parameters as the new starting point until $F$ stops changing, followed by a final trust-region least-squares polish on the residual vector. The 16 integrations of one evaluation of $F$ (one per test) are independent and run in parallel; one evaluation takes about one second on a laptop with 14 workers, and the whole identification less than an hour.

\subsection{Result}

\begin{table}[t]
\centering\small
\caption{Material parameters for copper. Fixed by the scaling law: $s_T=6.3915\times10^{-3}$, $T_P=45\,000$\,K, $\edot_r=3.16\times10^{12}$\,s$^{-1}$. Identified: $\ct_0$, $c_0$, $c_1$, $K_\rho$. Derived: $s$, $\tz$, $\rt_s$, $K_\chi(T)$. Second row: the compression-only identification of \citet{le2020two}.}
\label{tab:parameters}
\begin{tabular}{lcccccccc}
\toprule
 & $\ct_0$ & $c_0$ & $c_1$ (K) & $K_\rho$ & $s$ & $\tz$ (s) & $\rt_s$ & $K_\chi$ at 298 / 1173 K\\
\midrule
this work & 0.216 & 0.073 & 118 & 1.18 & 0.065 & $3.1\times10^{-14}$ & 0.0097 & 0.91 / 1520\\
\citet{le2020two} & 0.233 & 0.42 & 168 & 1.67 & 0.055 & $3.7\times10^{-14}$ & 0.010 & 2.5 / 450\\
\bottomrule
\end{tabular}
\end{table}

\begin{table}[t]
\centering\small
\caption{Identified initial states of the nine specimen groups, as the ratios $f=\rt_i/\rt_s(\ct_i)$, $g=\ct_i/\ct_0$ of \eqref{eq:fg} and as $(\rt_i,\ct_i)$.}
\label{tab:states}
\begin{tabular}{llcccc}
\toprule
specimens & $T$ (K) & $f$ & $g$ & $\rt_i$ & $\ct_i$\\
\midrule
JC tension \citep{johnson1983constitutive} & 298 & 0.076 & 0.797 & $2.3\times10^{-4}$ & 0.172\\
 & 496 & 0.153 & 0.817 & $5.3\times10^{-4}$ & 0.176\\
 & 732 & 0.716 & 0.685 & $8.3\times10^{-4}$ & 0.148\\
JC torsion tubes \citep{johnson1983constitutive} & 298 & 0.252 & 0.815 & $8.6\times10^{-4}$ & 0.176\\
Lindholm tubes \citep{lindholm1980large} & 298 & 0.112 & 0.743 & $2.2\times10^{-4}$ & 0.160\\
Samanta compression \citep{samanta1971dynamic} & 298 & 0.007 & 0.856 & $3.3\times10^{-5}$ & 0.185\\
 & 873 & 0.076 & 0.497 & $6.8\times10^{-6}$ & 0.107\\
 & 1023 & 0.110 & 0.542 & $2.1\times10^{-5}$ & 0.117\\
 & 1173 & 0.091 & 0.737 & $1.7\times10^{-4}$ & 0.159\\
\bottomrule
\end{tabular}
\end{table}

Tables~\ref{tab:parameters} and \ref{tab:states} and Fig.~\ref{fig:identification} give the result. The rms relative errors are $5.3$\,\% for the 27 Johnson--Cook tension points, $5.7$\,\% for their 22 torques, $7.1$\,\% for the 48 Lindholm torques, $3.1$\,\% for Samanta's room-temperature compression and $4.9$\,\% for his 39 hot compression points, $5.9$\,\% over all 142 points, which span $298$--$1173$\,K, $10$--$2300$\,s$^{-1}$ and three stress states. All computed curves are monotone. Several remarks.

(i) \emph{Consistency with the scaling-law identification.} The material parameters come out where the compression-only identification of \citet{le2020two} had put them: $\ct_0=0.216$ against $0.233$, $K_\rho=1.18$ against $1.67$, and the derived $s=0.065$ and $\tz=3.1\times10^{-14}$\,s against $\tilde r=0.055$ and $3.7\times10^{-14}$\,s, now with tension and torsion added to the data set. The one quantity that changes is the temperature dependence of $K_\chi$: $c_0=0.073$, $c_1=118$\,K give $K_\chi=0.91$ at $298$\,K and $120$ at $873$\,K against $2.5$ and $75$ in \citet{le2020two}. With the complete hot compression curves in the fit the exponential is now decided by their transients, and it is steeper.

(ii) \emph{The parameters $\ct_0$, $K_\rho$, $c_0$ compensate one another.} The least-squares surface is flat along a valley in these three: with the bounds released, minima of practically the same $F$ exist at $\ct_0$ down to $0.11$ with $K_\rho$ at $0.1$ and $c_0$ ten times larger. The bounds $\ct_0\in[0.15,0.25]$ and $K_\rho\in[1,10]$ are therefore not a numerical convenience but the physical input that selects the minimum: the first is Langer's bound on the steady-state configurational temperature \citep{langer2010thermodynamic}, both are the neighborhood of the values of \citet{le2020two}. That the minimum then settles at the values of \citet{le2020two} rather than at a bound is the consistency check.

(iii) \emph{The torsion tubes.} With the Johnson--Cook torsion specimens sharing the initial state of the room-temperature tension specimens, their torques misfit by $12$\,\% rms and the misfit is systematic: the measured torque at $\gamma=0.2$ corresponds to $\sqrt3\tau=291$\,MPa at $\varepsilon_{\eq}=0.115$ and $148$\,s$^{-1}$, whereas tension at $451$\,s$^{-1}$ gives $215$\,MPa at the same strain, while at $\gamma=0.5$ the two agree. Johnson and Cook noted this tension--torsion discrepancy and averaged their constants over both tests. No $J_2$ theory can fit both with one initial state, and the third-invariant dependence of any partition of slip (Remark~\ref{rem:family}) is far too small, and of the wrong sign, to account for it. With their own initial state the torques at $9.3$ and $35$\,s$^{-1}$ are fitted to $2$\,\% and the complete record at $148$\,s$^{-1}$ to $6$\,\% (see (iv) for its shape), and the identified state, $\rt_i=8.6\times10^{-4}$, is four times that of the tension specimens. The Lindholm tubes, from the same apparatus and specimen design, confirm the picture: their curves reach $\tau=145$--$180$\,MPa, i.e. $\sqrt3\tau\approx250$--$300$\,MPa, at $\gamma=0.25$--$0.5$, three times the yield stress of the Hopkinson-bar specimens. The tubes were annealed at $370^\circ$C for one hour and are fine-grained ($25$--$35\,\mu$m) \citep{lindholm1980large}; the theory, which has no grain-boundary strengthening, represents this harder state through a higher initial dislocation density. Since Johnson and Cook do not document the state of their tubes, a separate initial state is the only admissible treatment, and it is what the data require.

(iv) \emph{Where the theory is at its limits.} Three residual patterns are systematic and should be named. The Johnson--Cook tension curve at $732$\,K is fitted with an S-shaped error ($-9$\,\% at $\varepsilon=0.06$, $+10$\,\% at $0.3$) and an initial state close to saturation ($f=0.72$): at this temperature the exponential \eqref{eq:KchiT} anchored at $298$ and $873$--$1173$\,K gives $K_\chi\approx36$, and the curve would prefer a slower approach to its steady state -- the price of one exponential over $298$--$1173$\,K. The three Lindholm curves are fitted with errors of opposite sign at the two ends of the rate range ($-2$ to $-11$\,\% at $174$\,s$^{-1}$, $+8$ to $+13$\,\% at $330$\,s$^{-1}$), because the measured curves are ordered inversely to the nominal rate over the whole window, a consequence of the acceleration period of the machine that no rate-dependent constitutive law can reproduce; restricted to $\gamma\ge1.25$ their rms error is $5.6$\,\%. The complete Johnson--Cook record at $148$\,s$^{-1}$ is followed to $3$\,\% between $\gamma=0.25$ and $1.5$, but the measured torque keeps rising at a nearly constant rate to $\gamma=3$ (from $50$ to $57$\,N\,m) where the computed one has saturated, so that the error grows to $-12$\,\% at $\gamma=3$; this is the onset of stage-IV hardening, which requires the evolution of texture and of the cell structure and is outside the present theory, as noted in Section~\ref{sec:finite}. The Lindholm curves at the same strains do not show it, their plateau being followed by softening, which is why the two torsion data sets could be fitted with one material law only up to $\gamma=3$. The torque rise between $9.3$ and $148$\,s$^{-1}$ at $\gamma=0.5$, $7$\,\% computed against $12$\,\% measured, is fixed by $s_T$ and $T_P$ of the scaling law and is not adjusted to the torsion data.

\begin{figure}[t]
\centering
\includegraphics[width=\textwidth]{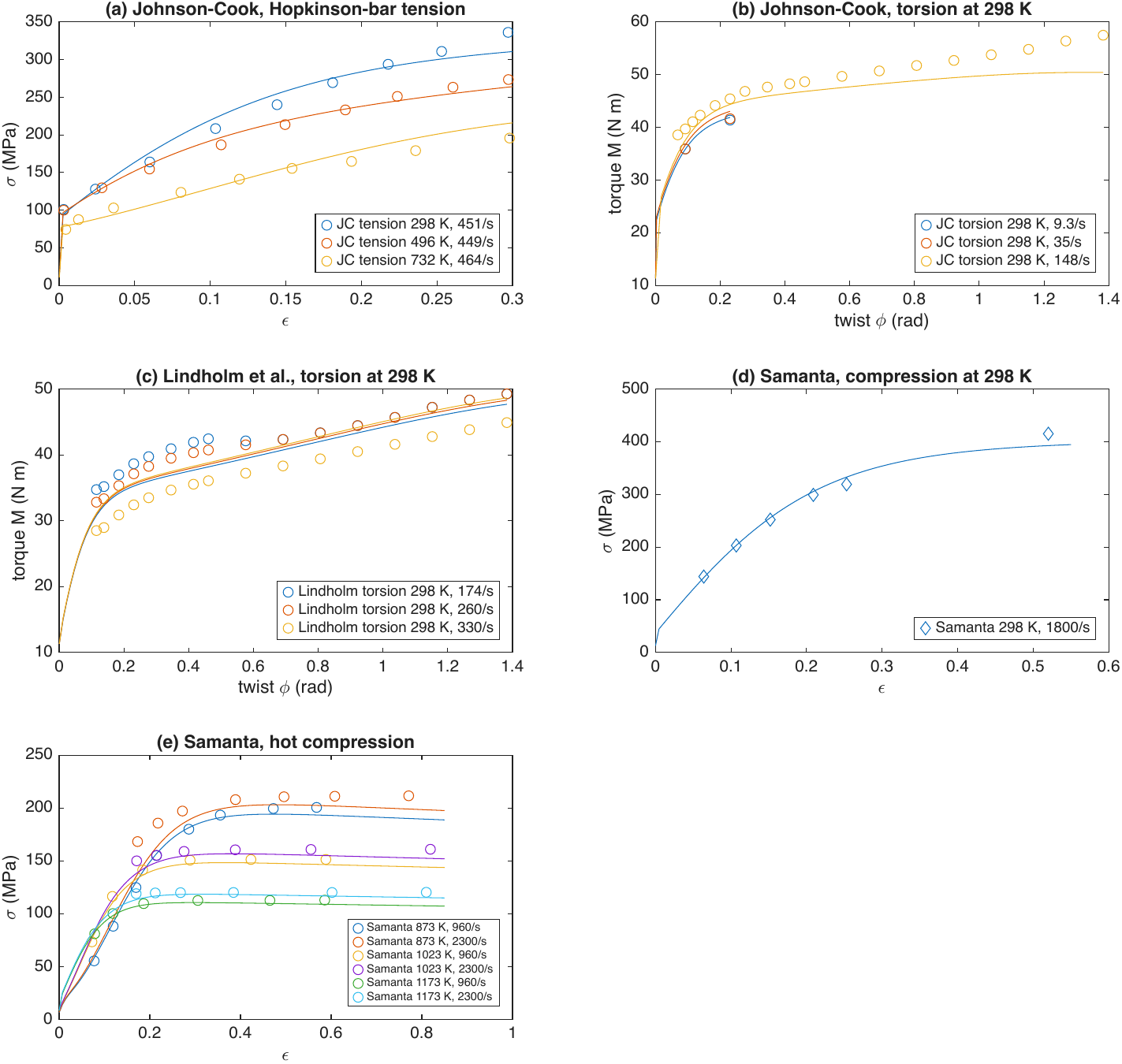}
\caption{Joint identification for copper with the parameters of Table~\ref{tab:parameters}; symbols: experiment, lines: theory. (a) Hopkinson-bar tension at three temperatures \citep{johnson1983constitutive}; (b) torque against twist for the thin-walled tubes at $9.3$, $35$ (two points each) and $148$\,s$^{-1}$ (complete record) \citep{johnson1983constitutive} (model drawn over the range of each record); (c) torque against twist at $174$, $260$ and $330$\,s$^{-1}$ \citep{lindholm1980large}; (d) compression at $298$\,K and $1800$\,s$^{-1}$ \citep{samanta1971dynamic}; (e) compression at $873$, $1023$ and $1173$\,K at $960$ and $2300$\,s$^{-1}$ \citep{samanta1971dynamic}.}
\label{fig:identification}
\end{figure}

\section{Discussion and outlook}
\label{sec:discussion}

The theory obtained here is an associated $J_2$ flow theory with rate- and temperature-dependent isotropic hardening. Its structure -- the von Mises potential, the Prandtl--Reuss flow direction, and the geometric factors $\sqrt{15}$ and $\sqrt{5/3}$ that connect the macroscopic stress and strain rate to the resolved quantities of a slip system -- follows from two hypotheses, equal probability of grain orientations and linear partition of slip over orientations, and none of it is fitted. The hardening is supplied by thermodynamic dislocation theory through the depinning kinetics and the evolution of $\rho$ and $\chi$, driven by the plastic power $\sigma_{\eq}\dot\varepsilon^p_{\eq}$. Written as rates with respect to time, the equations apply to arbitrary loading paths, and the torque--twist relation of torsion follows without the classical reductions.

The identification shows that the same material parameters -- and, up to the temperature dependence of $K_\chi$, the same values as the earlier identification from compression alone -- describe tension, compression and torsion of copper over more than two decades of dynamic strain rate and from room temperature to two-thirds of the melting temperature, once three things are respected: the quasi-static regime is excluded, the pinning temperature and the reference strain rate are taken from the scaling law for the steady-state flow stress rather than fitted anew, and the initial dislocation state is treated as a property of the specimen. The last point resolves the tension--torsion discrepancy of Johnson and Cook in favor of the material law. It also shows the limit of what a least-squares identification can do: the parameters $\ct_0$, $K_\rho$ and $c_0$ compensate one another in the flow stress, and only physical bounds, or independent information on the initial dislocation density of the annealed material, select among the equivalent minima.

Three extensions are indicated. First, texture: for a textured material the measure \eqref{eq:measure} must be replaced by the orientation distribution function; then $\bT$ is no longer isotropic, and the same two identities produce an anisotropic quadratic potential of Hill type with the flow direction $\bT$-weighted. Texture evolution would require coupling to the lattice rotation -- the point at which the present macroscopic route and the crystal-plasticity route of \citet{lieou2020thermodynamic} meet -- and it is also the mechanism behind the large-strain torsion curves beyond $\gamma\approx1$ that the present theory, saturating at $\rt_s$, does not follow. Second, torsion of solid bars: the theory predicts the shape of the torque--twist curve of a bar and the ratio of the torques of bars of different radii at equal surface strain without any further parameter, and the measured size effect in torsion \citep{le2019bthermodynamic} is then the signature of the geometrically necessary dislocations that the next step introduces. Third, kinematic hardening and the Bauschinger effect are absent here because the internal state is described by the scalars $\rho$ and $\chi$ of redundant dislocations; they arise naturally from non-redundant dislocations in non-uniform macroscopic deformation, which is the subject of that step.

\bibliographystyle{elsarticle-harv}
\bibliography{refs}

\end{document}